\documentclass[11pt,a4paper]{article}
\pdfoutput=1
\usepackage{lmodern}
\usepackage[T1]{fontenc}
\usepackage[utf8]{inputenc}
\usepackage[margin=2.5cm]{geometry}
\usepackage{amsmath,amssymb,amsthm}
\usepackage{booktabs,tabularx,array}
\usepackage{graphicx}
\usepackage[hidelinks]{hyperref}
\usepackage[protrusion=false,expansion=false]{microtype}
\usepackage{setspace}
\usepackage{caption}
\usepackage{natbib}
\bibpunct{(}{)}{;}{a}{}{,}
\theoremstyle{plain}
\newtheorem{proposition}{Proposition}
\newtheorem{lemma}[proposition]{Lemma}
\newtheorem{corollary}[proposition]{Corollary}
\theoremstyle{definition}
\newtheorem{assumption}{Assumption}
\newtheorem{definition}[proposition]{Definition}
\theoremstyle{remark}
\newtheorem{remark}[proposition]{Remark}

\newcommand{\Prob}{\mathbb{P}}

\title{\bfseries Three Ceilings:\\[3pt]
\large Model Monoculture, Solvency, and the Penalty Doctrine\\
in Markets for Expert Services}

\author{Andreas Bauer\thanks{Aegis Compliance and Strategies O\"U, Tallinn, Estonia.
Correspondence: \texttt{a.bauer@science.us.org}. ORCID: 0000-0002-6539-9353.
\emph{Declarations of interest:} the author is Managing Director and Founder of Aegis Compliance
and Strategies O\"U, a compliance and strategy advisory firm, and is preparing a practitioner
book that draws on this line of work; both are disclosed in the interest of transparency.
\emph{Data availability:} replication code, the full parameter-provenance table and the random
seeds for every reported number are openly available; see Appendix~\ref{app:data}.
This paper consolidates and in one respect corrects three companion working papers by the
author, Bauer (2026a,b,c); Appendix~\ref{app:prov} records the provenance of every component.
All errors are my own.}\\[2pt]
\normalsize Aegis Compliance and Strategies O\"U}
\date{Working Paper --- this version: August 2026}

\begin{document}
\maketitle
\vspace{-0.8cm}

\begin{abstract}\noindent
When generative AI drives the marginal cost of a persuasive expert artefact toward zero,
production-cost signals of competence collapse and outcome-contingent liability commitments take
their place. Such commitments look robust to better AI: a positive failure rate always leaves a
residual to price. We show that this robustness rests on an unstated homogeneity assumption, and
that a liability commitment faces three ceilings, not one. AI error decomposes into idiosyncratic
and common components; capability growth eliminates the idiosyncratic part faster
\citep{kim2025}, so the residual becomes progressively common --- and common error is precisely
what a verifier drawn from the same foundation model cannot observe: judge scores correlate with
judge--generator model similarity at an average $r=0.84$ \citep{goel2025}. Provability is therefore
state-dependent, $\theta_{\mathrm{eff}}=\theta_0(1-\xi\kappa)$. Separation requires the
commitment that must be posted to fall below what can be posted,
$v/\theta_{\mathrm{eff}} \le \min\{\bar L,\, mv\}$, yielding the survival condition
$\xi\bar\kappa \le 1-1/(M\theta_0)$ with $M=\min\{\bar L/v,\,m\}$ and a regime switch at the
ticket size $v^{*}=\bar L/m$: small engagements are constrained by the penalty doctrine, large
ones by capital. Calibrated to six engagement types across four jurisdictions, the generative-AI
insurance exclusions effective January 2026 shift eight of twenty-four cells from doctrine-bound
to solvency-bound and cut the cells surviving $\xi=0.4$ from twelve to four. The insurance
withdrawal is a civil-law event; the common-law cells were already at the boundary.
\medskip

\noindent\textbf{Keywords:} costly signalling; credence goods; verifiability; liability;
algorithmic monoculture; liquidated damages.

\noindent\textbf{JEL:} D82, D86, G22, K12, K13, L15, O33.
\end{abstract}

\newpage
\onehalfspacing

\section{Introduction}

Markets for expert services are credence-goods markets: the buyer cannot assess quality even
after consumption \citep{darby1973,dulleck2006}. One historical response was signalling. Because
a rigorous legal memorandum or a clean codebase was expensive to produce and more expensive for a
low-competence provider to imitate, the artefact carried information in the sense of
\citet{spence1973} and \citet{riley1979}.

Generative AI removes this. When the marginal cost of a persuasive artefact approaches zero for
every type, the single-crossing property fails and the production-cost signal collapses. What
survives is an \emph{outcome-contingent} commitment --- a liability cap, an indemnity, a
liquidated-damages clause --- whose expected cost depends on the residual states in which the AI
fails and human competence must rescue the engagement. This is the warranty-as-signal mechanism
of \citet{grossman1981}, \citet{lutz1989} and \citet{daughety1995}. Its apparent attraction is
immunity to capability growth: as long as the failure probability $\pi$ is positive, something is
left to price, and the signal attenuates rather than vanishes.

This paper makes two corrections, one to the mechanism and one to the framing.

\paragraph{The mechanism.} The attenuation argument treats the residual as homogeneous. AI error
decomposes into an idiosyncratic component and a common component arising from shared training
corpora, shared architectural priors and shared blind spots. \citet{kim2025}, evaluating over 350
language models, document substantial error correlation --- on one leaderboard dataset models
agree 60\% of the time when both err --- and, decisively, that larger and more accurate models
have highly correlated errors even with distinct architectures and providers. Capability growth
eliminates the random errors first; what remains is increasingly shared.

That matters because enforcing an outcome-contingent commitment requires that failure be
\emph{provable}. If the reviewer is drawn from the same foundation model as the provider's
generator, then in exactly the states where error is common the reviewer shares the blind spot.
\citet{goel2025} measure this directly: across LLM-as-judge settings they find a positive
correlation, averaging Pearson $r=0.84$ across nine judges, between judge scores and
judge--generator model similarity (their \S3.2, Fig.~3), and conclude
that using a held-out model as judge is insufficient because similarity is a confounder that must
be controlled. \citet{denisov2026} show the complement: aggregating more samples cannot
substitute for a verifier when errors are correlated, and a negative control on random strings
yields persistent above-chance agreement, indicating shared priors rather than shared knowledge.
Provability is therefore not a parameter of the forensic environment. It is state-dependent.

\paragraph{The framing.} An earlier version of this argument \citep{bauer2026a} concluded that
the market dies of an information crisis. That conclusion was not established, because a liability commitment faces
three ceilings, not one, and the earlier argument ordered only two of them. Separation requires
\begin{equation}
\underbrace{\frac{v}{\theta_{\mathrm{eff}}}}_{\text{must be posted}}
\;\le\;
\underbrace{\min\{\bar L,\; mv\}}_{\text{can be posted}},
\label{eq:three}
\end{equation}
where $\bar L$ is the provider's credible capacity and $m$ the multiple of actual loss that
survives review under the penalty doctrine. The provability requirement is a lower bound on the
commitment; solvency and doctrine are upper bounds. Which upper bound binds is a question about
the ratio $\ell = \bar L/v$ relative to $m$, and it has an economic content the earlier framing
suppressed: \emph{small engagements are constrained by law, large engagements by capital}, with
the switch at $v^{*} = \bar L/m$. This recovers, in a different guise, the judgment-proof logic
of \citet{shavell1986} and \citet{summers1983}, in which the effective liability ceiling is the
minimum of the legal award and the reachable assets.

\paragraph{Results.} Writing $M = \min\{\ell, m\}$ and $\bar\kappa = \sup_g \kappa(g) \le 1$, separation survives at
every capability level if and only if $\xi\bar\kappa \le 1 - 1/(M\theta_0)$; we report the
conservative benchmark $\bar\kappa = 1$. The threshold is steeply increasing in $M$ over
$[1,2]$ and flat above $M=3$: at $M = 1.2$ and $\theta_0 = 0.85$ it is $0.020$; at $M=2$,
$0.412$; at $M=5$, $0.765$. This replaces the point estimate of the earlier version, which fixed
$m = 1.2$ without justification --- a value we now regard as indefensible, since post-\emph{Cavendish}
English law fixes no multiple at all.

The regime map does the work. Calibrated to six engagement types across four jurisdictions, three
of twenty-four cells are solvency-bound when professional indemnity cover responds. After the
generative-AI exclusions introduced from January 2026 --- ISO/Verisk endorsements CG 40 47,
CG 40 48 and CG 35 08, and absolute exclusions such as W.R.\ Berkley form PC 51380 --- eleven of
twenty-four are solvency-bound, and the cells surviving $\xi = 0.4$ fall from twelve to four.
Common-law jurisdictions are doctrine-bound and were already at the boundary; civil-law B2B
contracting, where \S\,348 HGB withholds judicial reduction from merchants, is solvency-bound and
is where the insurance withdrawal bites.

\paragraph{What we do not claim.} We do not claim that the informational value of the liability
commitment goes to zero: $\theta_{\mathrm{eff}}D$ remains strictly positive for $\xi<1$, and
\citet{bauer2026b} is correct that the signal attenuates rather than vanishes. We do not claim
that one channel dominates. We do not claim the parameters are estimated. And we restrict the
scope: where verification is deterministic --- compilation, formal proof, reconciliation --- the
sharing parameter $\xi$ is structurally low and the mechanism does not apply. Section~\ref{sec:scope}
states these limits and what would falsify the argument.

\section{Related literature}\label{sec:lit}

\paragraph{Signalling and warranties.} The apparatus is \citet{spence1973} and \citet{riley1979},
with selection following \citet{cho1987} and existence with a continuum of types following
\citet{mailath1987}. That a warranty or liability commitment can signal unobserved quality is
established \citep{grossman1981,gal-or1989,lutz1989}. \citet{spence1977} shows full producer
liability is optimal under perfect competition; \citet{polinsky1983} qualify this for imperfect
competition. \citet{daughety1995} is the closest antecedent, with \citet{daughety2008} extending
to a unified model of disclosure and signalling --- a paper that also supplies the key negative
result for the disclosure alternative discussed in Section~\ref{sec:scope}.

\paragraph{Credence goods.} \citet{darby1973} name the category; \citet{dulleck2006} provide the
unifying framework in which \emph{liability} and \emph{verifiability} are the two institutional
assumptions that matter here; \citet{dulleck2011} establish experimentally that these are the
decisive institutions. \citet{erlei2025} studies generative AI in credence-goods markets
experimentally and finds that human--AI--human markets outperform human--human markets under
transparency rules but that the gains vanish under obfuscation.

\paragraph{Probabilistic enforcement.} That a stake must be grossed up by the inverse of the
probability of prevailing is standard in the enforcement literature \citep{becker1968,polinsky2000}
and in the litigation and settlement literature \citep{bebchuk1984}. \citet{craswell1999} supplies
the qualification that matters here: the simple multiplier is valid only when the probability of
sanction is independent of the state. \citet{spier1992} is the closest antecedent for a signalling
model in which ex-post enforcement and verification costs generate the signalling friction, though
without a provability gross-up or a ceiling. On limited liability in contracting see
\citet{sappington1983} and \citet{innes1990}.

\paragraph{Judgment-proof providers.} \citet{shavell1986} and \citet{summers1983} establish that
where assets fall short of harm, the effective liability ceiling is the reachable capital rather
than the legal award. \citet{che2008} show that entrepreneurs may structure finance strategically
to shield assets, and \citet{boyd1994} study damage caps under potential insolvency --- the legal
ceiling in isolation. This is the literature that our solvency ceiling instantiates. The
construction $M=\min\{\ell,m\}$ joins the asset side and the cap side, and the ticket-size switch
$v^{*}=\bar L/m$ that results appears, to our knowledge, to be new.

\paragraph{Algorithmic monoculture.} \citet{kleinberg2021} show that monocultural convergence on a
common algorithm can reduce collective decision quality. \citet{gorecki2025}, evaluating fifty
language models across six prediction tasks, find that the empirical landscape sits between
monoculture and multiplicity: systematic exclusion without recourse is rare, but model similarity
is real. We read this as support for our Assumption~\ref{ass:shift} together with a caution
against setting $\xi = 1$. \citet{kim2025} and \citet{goel2025} supply the quantitative core,
and \citet{kuai2026} close the apparent escape route: auditing behavioural entanglement across
eighteen models from six families, they find that shared pretraining data, distillation and
alignment induce latent dependencies that undermine ensemble verification precisely because such
schemes assume independent signals, so assembling nominally diverse verifiers does not restore
independence. None of this literature considers contractual enforcement.

\paragraph{Deskilling.} \citet{budzyn2025} provide the first real-world clinical evidence:
across four Polish centres, the adenoma detection rate in unassisted colonoscopy fell from
28.4\% (226/795) before routine AI exposure to 22.4\% (145/648) afterwards, a 6.0 percentage-point
absolute decline, with an adjusted odds ratio of 0.69 (95\% CI 0.53--0.89). The study is
observational, not randomised, and we treat it as such. \citet{humlum2025} estimate precise null
effects of generative AI on Danish earnings and hours, ruling out effects above 2\%. We treat
this as a discipline on the form of the claim: our thresholds are capability thresholds and
regime boundaries, not calendar predictions.

\section{Model}\label{sec:model}

\subsection{Environment}

Providers are indexed by fallback capability $s \in [s_F, \alpha)$, where $\alpha$ is AI
capability. Each serves engagements of value $v$. The AI fails with probability $\pi$;
conditional on failure the provider rescues the engagement with probability
$\rho(s) = \rho_{\max}(s/\alpha)^{a}$, $a \in (0,1)$, so $\rho' > 0$ and $\rho'' < 0$.

Capability is endogenous, built by routing a share $h$ of engagements through human review and
depreciating through disuse:
\begin{equation}
s' = s + (\alpha-s)(1-\pi)\big[\varphi\pi h - \gamma(1-h)\big].
\label{eq:dyn}
\end{equation}

\begin{lemma}[Bang-bang structure]\label{lem:bang}
With $k(h) \equiv (1-\pi)[\varphi\pi h - \gamma(1-h)]$, \eqref{eq:dyn} gives
$s'-\alpha = (s-\alpha)(1-k)$, so the steady state is $\alpha$ if $k>0$ and $s_F$ otherwise, with
threshold $h_{\min} = \gamma/(\varphi\pi+\gamma)$, strictly decreasing in $\pi$. Only two
policies can be optimal: build at $h_{\min}+\epsilon$, or abandon.
\end{lemma}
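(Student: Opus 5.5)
The argument is mostly algebra followed by one comparison of payoffs, and I will take it in three steps.

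\emph{Step 1: the affine recursion.} Subtract $\alpha$ from both sides of \eqref{eq:dyn}. This gives
\[
s'-\alpha=(s-\alpha)-(\alpha-s)k=(s-\alpha)(1-k),
\]
so the gap to $\alpha$ contracts or expands geometrically at the constant factor $1-k$ for a fixed policy $h$. Iterating, $s_t-\alpha=(s_0-\alpha)(1-k)^t$.

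\emph{Step 2: the steady states.} Take $s_0<\alpha$ and $k$ small enough that $1-k\ge 0$. The parameter restrictions $\varphi\pi,\gamma<1$ and $\pi\in(0,1)$ give $|k|<1$, which I will state explicitly.
\begin{itemize}
\item If $k>0$, then $0\le 1-k<1$, so the gap shrinks to zero and $s_t\to\alpha$.
\item If $k<0$, then $1-k>1$, so the gap grows and capability falls until it hits the floor. I read the domain $s\in[s_F,\alpha)$ as $s$ being truncated at $s_F$, so the steady state is $s_F$.
\item If $k=0$, capability stays at $s_0$. This is a knife-edge case, and it is weakly dominated by the strict choices, which is why the statement has $+\epsilon$.
\end{itemize}
Since $1-\pi>0$, the sign of $k$ is the sign of $\varphi\pi h-\gamma(1-h)=(\varphi\pi+\gamma)h-\gamma$. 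This is positive if and only if $h>h_{\min}=\gamma/(\varphi\pi+\gamma)$. Differentiating in $\pi$ gives $-\gamma\varphi/(\varphi\pi+\gamma)^2<0$, so $h_{\min}$ is strictly decreasing in $\pi$.

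\emph{Step 3: only two policies can be optimal.} This is the substantive step and the main obstacle, because it needs the objective, which has not yet been written down. I will assume the natural one: per-period human review costs rise in $h$, and the payoff depends on $h$ only through the cost and through the limiting capability $s_\infty$ (or the rescue value $\rho(s)$). From Step 2, $s_\infty$ takes only two values: $\alpha$ for every $h>h_{\min}$ and $s_F$ for every $h<h_{\min}$.
\begin{itemize}
\item \emph{Building branch.} Within $h>h_{\min}$, raising $h$ only speeds up convergence while strictly raising cost. At the steady state the speed no longer matters, so the cheapest admissible choice $h_{\min}+\epsilon$ dominates. Under discounting I would additionally need the transition gain from faster convergence to be second order. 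If that fails, the claim holds only in the long-run (undiscounted) sense, and I would flag this.
\item \emph{Abandoning branch.} Within $h<h_{\min}$, every choice reaches $s_F$, so $h=0$ minimises cost. I call this ``abandon''.
\end{itemize}
So the optimum is a comparison between two corner policies. This is the bang-bang structure, and the comparison itself is left to the later payoff analysis.
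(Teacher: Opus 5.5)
Your proof is correct and follows the paper's own argument. You subtract $\alpha$ to obtain the contraction factor $1-k$, read the steady state off the sign of $k$, differentiate $h_{\min}$, and get the bang-bang conclusion from a step-function steady state combined with a flow cost increasing in $h$.

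Your added care makes explicit what the paper leaves implicit: the bound $|k|<1$, truncation at $s_F$, and the $k=0$ knife-edge. The same holds for the discounting caveat, which in fact also bites on the abandon branch, since a higher $h<h_{\min}$ slows the decay. One slip to fix: in Step~1 the middle expression should read $(s-\alpha)+(\alpha-s)k$.
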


That $h_{\min}$ falls in $\pi$ is the reliability paradox: as the AI improves, a \emph{larger}
share of engagements must pass through humans merely to hold capability constant, because
failures are the only instructive events.

\subsection{The liability signal and its three ceilings}

A provider may post an outcome-contingent commitment $L$, payable when the engagement fails and
the failure is established. With $\theta$ the probability that failure is established ex post,
\begin{equation}
C(s,L) = \theta\pi(1-\rho(s))L,
\qquad
\frac{\partial^{2}C}{\partial s\,\partial L} = -\theta\pi\rho'(s) < 0,
\label{eq:cost}
\end{equation}
so single crossing holds and the least-cost separating schedule is
$L^{*}(s) = c_0 + (\chi/\theta\pi)\ln[(1-\rho(s_F))/(1-\rho(s))]$.

Three constraints bound the commitment: one from below and two from above.

\begin{lemma}[The provability floor]\label{lem:floor}
Let the commitment make the client whole in the states where the engagement fails and is not
rescued, and let $\theta_{\mathrm{eff}} \equiv \Prob(\text{established} \mid \text{failure})$ be
the probability that a failure is established ex post. Full compensation requires
\[
\theta_{\mathrm{eff}}\,\pi\,\big(1-\rho(s)\big)\,L \;\ge\; \pi\,\big(1-\rho(s)\big)\,v ,
\]
in which $\pi$ and $1-\rho(s)$ cancel, leaving
\begin{equation}
L \;\ge\; \frac{v}{\theta_{\mathrm{eff}}} .
\label{eq:floor}
\end{equation}
\end{lemma}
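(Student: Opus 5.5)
The plan is to make precise the client's expected-recovery comparison that the statement writes down, and then derive \eqref{eq:floor} by cancellation. Fix a provider of capability $s$ and an engagement of value $v$. I will partition outcomes into three events: the AI succeeds, which has probability $1-\pi$; the AI fails and the provider rescues the engagement, with probability $\pi\rho(s)$; and the AI fails with no rescue, with probability $\pi(1-\rho(s))$. The client suffers a loss of $v$ only on the third event.

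The commitment $L$ is paid only on the subset of that third event where the failure is also established ex post. By the definition of $\theta_{\mathrm{eff}}$, that subset has probability $\pi(1-\rho(s))\theta_{\mathrm{eff}}$. This is the same product that appears in the cost function \eqref{eq:cost}, with $\theta$ replaced by $\theta_{\mathrm{eff}}$. So the client's expected recovery is $\theta_{\mathrm{eff}}\pi(1-\rho(s))L$, while the expected loss is $\pi(1-\rho(s))v$. I read ``make the client whole'' as requiring expected recovery to be at least expected loss. That requirement is exactly the displayed inequality in the statement.

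The remaining step is division. Assume $\pi>0$ (otherwise there is nothing to price, and the constraint is vacuous), $\rho(s)<1$, and $\theta_{\mathrm{eff}}>0$. The condition $\rho(s)<1$ holds because $s<\alpha$ gives $\rho(s)<\rho_{\max}\le 1$. Under these assumptions $\pi(1-\rho(s))>0$, so I can divide both sides by it and then by $\theta_{\mathrm{eff}}$ to get $L\ge v/\theta_{\mathrm{eff}}$. I will also point out two consequences of this cancellation. First, the floor does not depend on $s$ or $\pi$, so it is one lower bound common to all types, and it is the bound that sits on the left-hand side of \eqref{eq:three}. Second, if $\theta_{\mathrm{eff}}=0$, no finite $L$ satisfies the inequality.

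The main obstacle is a modelling point, not the algebra. The inequality uses $\theta_{\mathrm{eff}}$ conditional only on failure, but the payment actually depends on the joint event (failure, no rescue, established). The cancellation is valid only if establishment is independent of rescue given failure, so that $\Prob(\text{established}\mid\text{failure, no rescue})=\theta_{\mathrm{eff}}$. This is the \citet{craswell1999} qualification. I will state it as a maintained assumption, namely that provability depends on the error state (common versus idiosyncratic) and not on whether the human rescued the engagement. Under that assumption the conditional probabilities coincide and the derivation goes through. The state-dependence through $\kappa$ is then absorbed into the definition of $\theta_{\mathrm{eff}}$, which is taken up later.
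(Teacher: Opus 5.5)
Your proposal is correct and is essentially the paper's own argument. You set the expected indemnity on established unrescued failures, $\theta_{\mathrm{eff}}\pi(1-\rho(s))L$, against the expected loss $\pi(1-\rho(s))v$, cancel the common factor, and read ``make the client whole'' as the risk-neutral client's participation condition, exactly as the paper's remark on full compensation does. The side conditions you add --- $\pi(1-\rho(s))>0$, $\theta_{\mathrm{eff}}>0$, and independence of establishment from rescue given failure so that the payout probability factorises --- are left implicit in the paper (the same factorised product already appears in the cost function $C(s,L)$), so stating them tightens the argument without changing it.
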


The cancellation is what makes \eqref{eq:floor} a clean floor: it is independent of the failure
rate and of the type, because both scale the loss and the payout identically.

\begin{remark}[Full compensation is a premise, and where it holds]\label{rem:full}
In warranty-signalling models with consumer-side moral hazard, partial coverage is the typical
equilibrium: \citet{cooper1985} obtain incomplete insurance under double moral hazard,
\citet{lutz1989} signals high quality with a \emph{low} warranty, and in \citet{matthews1987}
coverage is generically partial and non-monotone in type. Here the client takes no action that
affects the failure probability --- the engagement is produced by the provider's AI and rescued,
or not, by the provider's capability --- so the moral-hazard rationale for partial coverage is
absent and full coverage is the natural benchmark, as in \citet{grossman1981}. Equivalently,
\eqref{eq:floor} is the client's participation condition: below the floor the expected indemnity
$\theta_{\mathrm{eff}}L$ falls short of the loss $v$, and a risk-neutral client declines any
price that embeds a competence premium. If client-side moral hazard were present, partial
coverage would re-emerge and the floor would overstate the required commitment; the direction of
that bias matches the endogenous-$\theta$ case in Section~\ref{sec:scope}.
\end{remark}

\begin{remark}[What the floor is not]\label{rem:notfloor}
Two readings must be set aside. First, \eqref{eq:floor} is \emph{not} a no-mimicry condition.
Requiring that the fringe type not profit from posting a higher type's commitment yields the
differential equation $\chi\rho'(s) = \theta\pi(1-\rho(s))L^{*\prime}(s)$ whose solution is the
schedule itself; the no-mimicry requirement \emph{is} $L^{*}(s)$, not an additional floor beneath
it. Second, \eqref{eq:floor} coincides algebraically with the enforcement multiplier of
\citet{becker1968} and \citet{polinsky2000}, under which the optimal sanction equals harm divided
by the probability of detection. The coincidence is formal only. That multiplier presupposes an
offence that is chosen and can be deterred; here the AI failure is stochastic and the provider's
decision margin is the fallback investment $h$ of Lemma~\ref{lem:bang}, not the failure itself.
We therefore rest \eqref{eq:floor} on compensation rather than deterrence.
\end{remark}

\begin{remark}[Why the floor is stated in $\theta_{\mathrm{eff}}$]\label{rem:cond}
\eqref{eq:floor} conditions on failure. This is not a refinement but a requirement.
\citet{craswell1999} shows that once the probability of sanction depends on the state, the
multiplier computed case by case, on average, or as a constant diverge, so that equating the
expected sanction to harm divided by an \emph{unconditional} probability does not generally
achieve the intended standard. The warning binds here rather than incidentally: the mechanism of
Section~\ref{sec:collapse} is precisely that the verifier fails in the states where the error is
common, so failure and non-establishment are positively correlated by construction. Writing the
floor with an unconditional $\theta_0$ would understate it by the factor $(1-\xi\kappa)$.
\end{remark}

Solvency caps credible capacity at $\bar L$, and the penalty doctrine caps the enforceable term at
$mv$: the two upper bounds. Separation requires \eqref{eq:three}. Dividing by $v$ and writing
$\ell \equiv \bar L/v$ for the solvency ratio:
\begin{equation}
\frac{1}{\theta} \;\le\; M, \qquad M \equiv \min\{\ell,\, m\}.
\label{eq:compact}
\end{equation}

\begin{lemma}[The floor and the schedule]\label{lem:reconcile}
The schedule $L^{*}(s)$ and the floor \eqref{eq:floor} are logically independent, so the floor is
not redundant. The implemented commitment is $L(s) = \max\{L^{*}(s),\, v/\theta_{\mathrm{eff}}\}$.
Since $L^{*}(s_F) = c_0$ and $L^{*}$ is strictly increasing, the floor binds on a lower interval
$[s_F, \tilde s]$, where $\tilde s$ solves $L^{*}(\tilde s) = v/\theta_{\mathrm{eff}}$ whenever
such a $\tilde s$ exists in the type space. On that interval the schedule cannot be implemented,
types pool at the floor and separation fails --- bunching in the screening sense of
\citet{mussa1978} and \citet{maskin1984}, transplanted here to a signalling model in which the
pooling is induced by an exogenous enforcement floor rather than by the ironing of a menu; above
$\tilde s$ the schedule governs and the floor is slack.
\end{lemma}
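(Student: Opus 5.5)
The plan has three steps. First, establish logical independence by exhibiting the two inequalities as depending on disjoint parameter sets. Second, justify the implemented commitment as a maximum of two lower bounds. Third, locate the binding interval and show that separation fails on it.

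\emph{Independence.} The schedule $L^{*}(s) = c_0 + (\chi/\theta\pi)\ln[(1-\rho(s_F))/(1-\rho(s))]$ depends on $(c_0,\chi,\theta,\pi,\rho_{\max},a,\alpha,s_F)$ but not on $v$. The floor \eqref{eq:floor} depends only on $(v,\theta_{\mathrm{eff}})$, and by Lemma~\ref{lem:floor} the factors $\pi$ and $1-\rho(s)$ cancel from it. Holding the schedule's parameters fixed and varying $v$ therefore moves $v/\theta_{\mathrm{eff}}$ across the whole of $(0,\infty)$. Taking $v$ small gives $v/\theta_{\mathrm{eff}} < c_0 = \inf_s L^{*}(s)$, so the floor is implied by the schedule for every type. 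Taking $v$ large gives $v/\theta_{\mathrm{eff}} > \sup_{s<\alpha} L^{*}(s)$ whenever that supremum is finite; this holds when $\rho_{\max}<1$, since then $1-\rho(s)\ge 1-\rho_{\max}>0$. In that case the schedule satisfies the floor for no type. Neither constraint implies the other, so the floor is not redundant. I will also cite Remark~\ref{rem:notfloor}: the no-mimicry condition generates $L^{*}$ itself and contains no reference to $v$, so it cannot produce the floor.

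\emph{The implemented commitment.} Both constraints are lower bounds on $L$. $L^{*}(s)$ is the least commitment type $s$ can post without being mimicked, and $v/\theta_{\mathrm{eff}}$ is the client's participation requirement from Remark~\ref{rem:full}. By \eqref{eq:cost}, the cost $C(s,L)$ is increasing in $L$, so the least-cost feasible choice is $\max\{L^{*}(s), v/\theta_{\mathrm{eff}}\}$. Next, strict monotonicity of $L^{*}$: differentiating gives $L^{*\prime}(s)=(\chi/\theta\pi)\rho'(s)/(1-\rho(s))>0$, because $\rho'>0$. Evaluating at the fringe type gives $L^{*}(s_F)=c_0$, since the logarithm vanishes there. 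Because $L^{*}$ is continuous and increasing, the set $\{s: L^{*}(s)\le v/\theta_{\mathrm{eff}}\}$ is a lower interval. When $c_0< v/\theta_{\mathrm{eff}}<\sup L^{*}$, the intermediate value theorem gives a unique $\tilde s$ with $L^{*}(\tilde s)=v/\theta_{\mathrm{eff}}$. The two boundary cases are an empty interval (the floor lies below $c_0$) and the whole type space (the floor lies above $\sup L^{*}$); these account for the clause ``whenever such a $\tilde s$ exists''.

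\emph{Bunching.} On $[s_F,\tilde s]$ every type posts the same $L=v/\theta_{\mathrm{eff}}$, so the signal is constant there. The buyer's inference from $L$ is then uninformative within the interval, and the types pool. Above $\tilde s$ the maximum equals $L^{*}(s)$, the floor is slack, and separation proceeds exactly as under the schedule.

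I expect this last step to be the main obstacle. Separation above $\tilde s$ must survive the truncation below $\tilde s$, and the schedule was derived with its initial condition at $s_F$. I would first argue that once the floor binds, the lowest separating type is $\tilde s$. The local incentive condition $\chi\rho'(s)=\theta\pi(1-\rho(s))L'(s)$ still holds for $s>\tilde s$, and since $L^{*}(\tilde s)$ equals the pooling level, the global incentive constraint at the boundary is at worst weakly binding. This pins down $\tilde s$ as stated. A fuller treatment would re-anchor the schedule at $\tilde s$, but the lemma asserts only the interval structure.
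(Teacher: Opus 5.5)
Your independence argument, the derivative $L^{*\prime}(s)=(\chi/\theta\pi)\rho'(s)/(1-\rho(s))>0$, and the intermediate-value case split are correct. They follow the paper's own reasoning, which is given only inline, since the appendix has no proof of Lemma~\ref{lem:reconcile}: $L^{*}$ is anchored at $c_0$ and does not involve $v$, while the floor scales with $v$. The parameter sets are not literally disjoint, because $\theta$ in $L^{*}$ and $\theta_{\mathrm{eff}}$ in the floor are the same establishment probability. Your argument only uses that $v$ enters one and not the other, so this does not matter.

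The gap is in the bunching step, where you claim the global incentive constraint at $\tilde s$ is ``at worst weakly binding''. It is strictly violated. Take the payoff $\chi\hat\rho-\theta\pi(1-\rho(s))L$, which generates the no-mimicry equation of Remark~\ref{rem:notfloor}. A pooled type is credited with $\bar\rho=\mathbb{E}[\rho(s)\mid s\in[s_F,\tilde s]]<\rho(\tilde s)$. Suppose a type just below $\tilde s$ posts $L^{*}(\tilde s+\varepsilon)$ instead. As $\varepsilon\to0$ it gains $\chi(\rho(\tilde s)-\bar\rho)>0$, while its extra cost $\theta\pi(1-\rho(s))[L^{*}(\tilde s+\varepsilon)-v/\theta_{\mathrm{eff}}]$ vanishes. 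So the profile $\max\{L^{*}(s),v/\theta_{\mathrm{eff}}\}$ is not an equilibrium, and re-anchoring the differential equation at $(\tilde s,v/\theta_{\mathrm{eff}})$ does not help.

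The root of the problem is already in your second step. $L^{*}(s)$ is a no-mimicry level computed on the premise that every lower type separates at its own $L^{*}$. It is therefore not a fixed, type-by-type lower bound that can be combined with the floor by a pointwise maximum. Any partial-pooling equilibrium needs an upward jump at its boundary type $\hat s$, of size $\chi(\rho(\hat s)-\bar\rho)/(\theta\pi(1-\rho(\hat s)))$. Nothing then ties $\hat s$ to the crossing $L^{*}(\hat s)=v/\theta_{\mathrm{eff}}$, so your ``this pins down $\tilde s$'' also fails.

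More fundamentally, the floor alone does not force pooling. The shifted schedule $L^{*}(s)+(v/\theta_{\mathrm{eff}}-c_0)$ solves the same differential equation and is globally incentive compatible by single crossing. It also satisfies \eqref{eq:floor} for every type and separates all of them. Only an upper ceiling $\min\{\bar L,mv\}$ biting on the high end of that schedule can defeat it, and that would force pooling among high types, not low ones.

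The paper does not close this step either: it asserts the pooling conclusion without any incentive argument, so you cannot borrow the missing step from it. To finish, you have two options. You can add a premise that pins the lowest type's commitment at $c_0$, which rules out the shifted schedule. Or you can restrict the conclusion to what your argument actually proves: on $[s_F,\tilde s]$ the $c_0$-anchored schedule violates the floor, and under the pointwise-maximum profile those types post an identical commitment and are not distinguished by it.
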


Lemma~\ref{lem:reconcile} has an interpretation. Because $L^{*}$ is anchored at the enforcement
cost floor $c_0$ while the provability floor scales with $v$, small engagements are the ones on
which the two collide: below a minimum ticket size no credible commitment separates any types at
all, which is the mirror image of the maximum ticket size derived in
Proposition~\ref{prop:regime}. One reading must be blocked here: in the screening literature the
pooling of types can be the principal's rent-minimising optimum, not a failure. That logic does
not transfer, for two reasons. There is no menu-designing principal in this market --- pooling is
forced by an enforcement floor, not chosen --- and the information that pooling destroys is not a
static allocation margin but the dynamic one: it is what sustains the provider's incentive to
keep building fallback skill at all \citep{bauer2026b}. Pooling here is deskilling with a lag,
which is why we count it as a collapse and not as an optimum.

\begin{remark}[Why the ordering is not a technicality]\label{rem:order}
$M$ is a minimum, so the comparative statics of the model switch with the identity of the binding
ceiling. A change in $\theta$ propagates to market outcomes through $\ell$ in one region of the
parameter space and through $m$ in another, and an intervention that relaxes the slack ceiling has
no effect at all. The earlier version of this argument compared the verifiability channel against
the pure attenuation channel and never asked whether $\ell$ or $m$ binds --- which is why its
conclusion about \emph{why} the market fails was unsupported even where its arithmetic was right.
\end{remark}

\subsection{Enforceability across jurisdictions}

The multiple $m$ is a legal datum and varies. In England, \emph{Cavendish Square Holding BV v
Talal El Makdessi} and \emph{ParkingEye Ltd v Beavis} [2015] UKSC 67 replaced the
genuine-pre-estimate test with a proportionality test: a secondary obligation is penal if it
imposes a detriment ``out of all proportion to any legitimate interest of the innocent party in
the enforcement of the primary obligation''. Two features matter. First, the rule reaches only
\emph{secondary} obligations triggered by breach; primary obligations and price-adjustment terms
escape review entirely, so a well-drafted commitment may face no ceiling at all. Second,
\emph{Cavendish} states a presumption of enforceability between sophisticated parties with legal
advice. There is accordingly no defensible point value for $m$ in England, and we report a range. The
verified post-2015 span is modest: in \emph{Houssein v London Credit Ltd} the Court of Appeal
corrected the test ([2024] EWCA Civ 721) and, on remission, a default rate at four times the
standard rate was held non-penal ([2025] EWHC 2749 (Ch)), a result the Court of Appeal has since
affirmed on a second appeal ([2026] EWCA Civ 830); we accordingly centre the English
multiple at three with a range of 1.2--4.
In the United States, Restatement (Second) of Contracts \S\,356 and UCC \S\,2-718 retain a
loss-referenced reasonableness test, giving $m \approx 1$. In Germany, \S\,343 BGB permits
judicial reduction but \S\,348 HGB withholds it from merchants acting in the course of business,
leaving only the residual limits of \S\,242 BGB and, for standard terms, \S\,307 BGB. The
distinction matters: in individually negotiated terms $m$ is large though not unbounded, while in
standard terms it is small --- the BGH holds a construction-contract penalty clause in general
terms invalid where its cap exceeds five per cent of the contract sum (BGH, 15 February 2024,
VII~ZR~42/22, continuing BGHZ 153, 311), so standard-form $m$ sits near one. Our German range
applies to negotiated B2B contracts only. In Austria, \S\,1336(2) ABGB mandates moderation
including, on the prevailing view, between undertakings.\footnote{The prevailing Austrian view
holds the moderation right mandatory and not waivable in advance, protecting undertakings as
well; a minority position in the practitioner literature treats it as waivable in B2B (Zak
2021/7). We follow the prevailing view and mark the value as contested in
Appendix~\ref{app:prov}.} Throughout we treat $m$ as exogenous; Section~\ref{sec:scope} sets out
the consequence, which cuts against our own result.

\section{The verification collapse}\label{sec:collapse}

\subsection{Decomposing the residual}

\begin{definition}[Common and idiosyncratic error]\label{def:err}
Let $g \ge 0$ index AI capability. Error decomposes as
$\pi(g) = \pi_c(g) + (1-\pi_c(g))\pi_\iota(g)$ with common share
$\kappa(g) \equiv \pi_c(g)/\pi(g) \in (0,1)$. Write $\bar\kappa \equiv \sup_g \kappa(g) \le 1$.
\end{definition}

\begin{assumption}[Compositional shift]\label{ass:shift}
$\kappa'(g) > 0$: capability growth eliminates idiosyncratic error faster than common error.
\end{assumption}

Assumption~\ref{ass:shift} is empirical, not a convenience. \citet{kim2025} report that larger
and more accurate models have highly correlated errors even with distinct architectures and
providers; \citet{goel2025} report that model mistakes become more similar as capabilities
increase. \citet{kuai2026} audit the mechanism directly: shared pretraining corpora, distillation
and alignment pipelines induce latent behavioural entanglement across model families, and
stronger dependence between judge and generator is associated with over-endorsement by the judge.
\citet{pombal2026} sharpen the judge-side result: self-preference bias persists even under fully
objective, rubric-based evaluation, and ensembling multiple judges mitigates it without
eliminating it.
\citet{gorecki2025} qualify the picture --- multiplicity is real and systematic
exclusion without recourse is rare --- which is why we treat $\xi$ as a parameter to be varied
rather than set to one, and why we carry $\bar\kappa$ as a free parameter rather than forcing
$\bar\kappa = 1$; the mid-range values $\xi \in [0.2, 0.4]$ that organise the calibration sit
deliberately between those poles.

\begin{remark}[Identification]\label{rem:ident}
To separate composition from level, we hold the total error path exogenous and vary only
composition: $\pi(g) = \pi_0e^{-bg}$ and $\kappa(g) = \bar\kappa-(\bar\kappa-\kappa_0)e^{-\delta g}$, with
$\bar\kappa \in (\kappa_0, 1]$. The null $\delta=0$ is the falsification benchmark, and no
statement about $\delta$ depends on the speed $b$ of capability growth. The calibration sets
$\bar\kappa = 1$ as the conservative benchmark; every threshold below scales as $1/\bar\kappa$.
\end{remark}

\subsection{State-dependent provability}

\begin{proposition}[Provability under model sharing]\label{prop:theta}
Let $\xi \in [0,1]$ be the intensity with which verifier and generator draw on the same
foundation model, and let $\nu \in [0,1]$ index gains in \emph{base} provability from better
forensic technology. Then
\begin{equation}
\theta_{\mathrm{eff}}(g) = \theta_0(g)\big(1-\xi\kappa(g)\big),
\qquad
\theta_0(g) = \theta_0 + \nu(1-\theta_0)\big(1-e^{-b_vg}\big),
\label{eq:thetaeff}
\end{equation}
strictly decreasing in $\xi$, in $\kappa$, and hence in $g$.
\end{proposition}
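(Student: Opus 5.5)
The plan is to derive \eqref{eq:thetaeff} by conditioning on the type of error, using the decomposition of Definition~\ref{def:err}. I will formalise the verifier as follows. Conditional on an \emph{idiosyncratic} failure, the verifier establishes it with base probability $\theta_0(g)$. Conditional on a \emph{common} failure, the verifier is drawn with probability $\xi$ from the generator's foundation model; in that case it shares the blind spot and establishes nothing. With probability $1-\xi$ it is independent and establishes the failure with probability $\theta_0(g)$.

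Given that modelling choice, the law of total probability over the failure type gives
\[
\theta_{\mathrm{eff}}(g)=\Prob(\text{established}\mid\text{failure})
=(1-\kappa(g))\,\theta_0(g)+\kappa(g)\,(1-\xi)\,\theta_0(g)
=\theta_0(g)\big(1-\xi\kappa(g)\big).
\]
Here $\kappa(g)=\pi_c/\pi$ is exactly the conditional probability that a realised failure is common. This is the conditioning that Remark~\ref{rem:cond} insists on. The base-provability path $\theta_0(g)=\theta_0+\nu(1-\theta_0)(1-e^{-b_vg})$ enters as a primitive specification of forensic progress. It moves from $\theta_0$ toward $1$ at rate $b_v$, scaled by $\nu$, and needs only checking that it stays in $[\theta_0,1]$.

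The monotonicity claims then follow by differentiation:
\[
\partial\theta_{\mathrm{eff}}/\partial\xi=-\theta_0(g)\kappa<0, \qquad \partial\theta_{\mathrm{eff}}/\partial\kappa=-\theta_0(g)\xi<0,
\]
using $\kappa\in(0,1)$, $\theta_0>0$, and, for the second, $\xi>0$. The strict claims should therefore be read on $\xi\in(0,1]$, with weak monotonicity at $\xi=0$. I will note this qualification explicitly.

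The main obstacle is the claim ``hence in $g$''. Differentiating gives
\[
\frac{d\theta_{\mathrm{eff}}}{dg}=\theta_0'(g)\big(1-\xi\kappa\big)-\theta_0(g)\,\xi\,\kappa'(g).
\]
Assumption~\ref{ass:shift} signs the second term negative, but $\theta_0'(g)=\nu(1-\theta_0)b_ve^{-b_vg}\ge0$ works the other way. Strict decrease in $g$ therefore holds unconditionally only in the benchmark $\nu=0$, where forensic technology does not improve. In general I would state the result as holding whenever
\[
\xi\theta_0(g)\kappa'(g)>\theta_0'(g)(1-\xi\kappa(g)).
\]
Under Remark~\ref{rem:ident}, $\kappa'(g)=(\bar\kappa-\kappa_0)\delta e^{-\delta g}$, so this is a comparison of the two exponential speeds $\delta$ and $b_v$ and of $\nu$ against $\xi(\bar\kappa-\kappa_0)$. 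I would present the strict $g$-monotonicity as the $\nu=0$ (or ``$\delta$ large relative to $b_v$'') case and flag the general condition rather than claim more than the model delivers.
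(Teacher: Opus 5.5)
Your derivation of \eqref{eq:thetaeff} is the paper's own. You condition on a failure, treat it as common with probability $\kappa$, let the verifier share the blind spot with probability $\xi$ in that state, and collect $\theta_0(g)[(1-\kappa)+\kappa(1-\xi)]=\theta_0(g)(1-\xi\kappa)$. The signs in $\xi$ and $\kappa$ follow as you say, and your caveat that strictness in $\kappa$ requires $\xi>0$ is correct.

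You depart from the paper at ``hence in $g$'', and there you are right. The paper's proof asserts that monotonicity in $g$ follows from Assumption~\ref{ass:shift}, because the $\nu$ term ``multiplies $\theta_0$ and leaves the bracket unchanged''. That only shows the bracket $1-\xi\kappa(g)$ falls, which is all Corollary~\ref{cor:nu} needs. It does not sign the derivative of the product once $\theta_0(g)$ rises: at $\xi=0$ and $\nu>0$, $\theta_{\mathrm{eff}}=\theta_0(g)$ is strictly increasing. So the statement as written holds for $\nu=0$ and $\xi>0$, which is evidently the case drawn in Figure~\ref{fig:mech}, whose $\xi=0$ curve is flat. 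Otherwise it holds only under your derivative inequality.

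One correction to how you gloss that inequality. Under Remark~\ref{rem:ident}, with $\nu>0$, the ratio of the magnitude of the negative term to the positive one is
\[
\frac{\xi(\bar\kappa-\kappa_0)\,\delta}{\nu(1-\theta_0)\,b_v}\cdot\frac{\theta_0(g)}{1-\xi\kappa(g)}\cdot e^{(b_v-\delta)g}.
\]
If $\xi\bar\kappa<1$, the middle factor is bounded by $1/(1-\xi\bar\kappa)$. So with $\delta>b_v$ the ratio tends to zero and $\theta_{\mathrm{eff}}$ eventually turns upward. A large $\delta$ relative to $b_v$ therefore helps only near $g=0$. Strict decrease for every $g\ge 0$ needs $\delta\le b_v$. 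In that case the ratio is increasing in $g$, and the condition reduces to your inequality evaluated at $g=0$.
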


The content is that enforcement requires evidence \emph{in the failure state}. Conditional on a
failure the event is common with probability $\kappa$, and in that state the verifier shares the
blind spot with probability $\xi$. Figure~\ref{fig:mech} shows the two halves. Panel~(a) is
Definition~\ref{def:err} under Assumption~\ref{ass:shift}: total error falls while the common
share rises, so the residual on which liability is priced becomes progressively common.
Panel~(b) carries that composition into \eqref{eq:thetaeff}. Two features are worth noting. The
$\xi = 0$ line is flat, because with independent verification the compositional shift is
irrelevant --- it is only through model sharing that the shift has any bite. And the decline is
bounded below by $\theta_0(1-\xi)$, which is Corollary~\ref{cor:nu} read off the figure.

\begin{figure}[t]\centering
\includegraphics[width=0.86\textwidth]{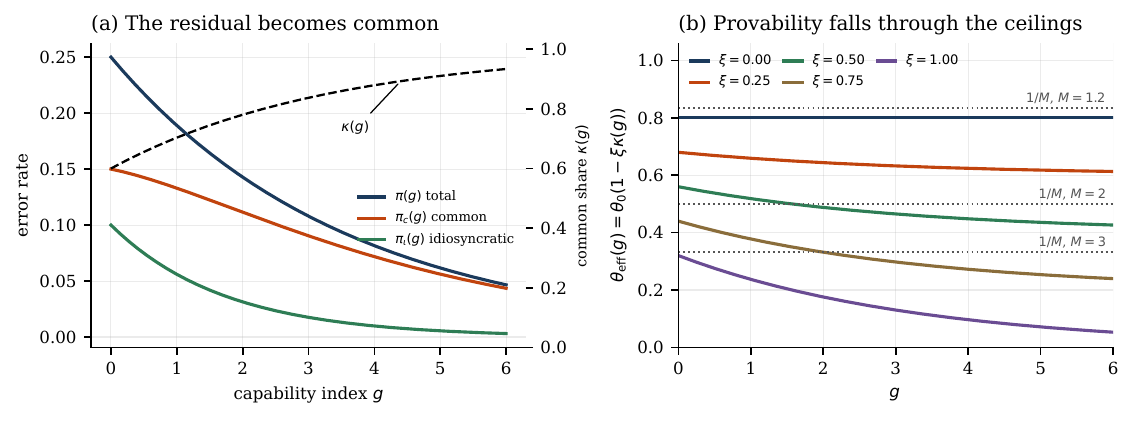}
\caption{The mechanism, at $\theta_0 = 0.80$, $\kappa_0 = 0.60$, $\delta = 0.30$, $b = 0.28$.
(a)~The residual becomes common. (b)~Effective provability under model sharing, against the
thresholds $1/M$ implied by three values of the binding ceiling. A curve lying below a dotted line
is a configuration in which separation is infeasible at that ceiling. At $\theta_0 = 0.80$ the
ceiling $M = 1.2$ fails even at $\xi = 0$, since $M\theta_0 < 1$.}\label{fig:mech}
\end{figure}

\subsection{The survival condition}

\begin{proposition}[Survival]\label{prop:survival}
Separation is feasible if and only if $\theta_{\mathrm{eff}} \ge 1/M$, that is
\[
\xi\kappa \;\le\; 1-\frac{1}{M\theta_0},
\qquad M = \min\{\ell,\,m\}.
\]
Separation survives at \emph{every} capability level, i.e.\ as $\kappa \to \bar\kappa$, if and
only if
\begin{equation}
\boxed{\;\xi\,\bar\kappa \;\le\; 1-\frac{1}{M\theta_0}\;}
\label{eq:survival}
\end{equation}
If $M\theta_0 \le 1$ separation is infeasible even at $\xi = 0$.
\end{proposition}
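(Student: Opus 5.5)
The proof chains three earlier results. The provability floor of Lemma~\ref{lem:floor} sets the lower bound on the commitment. The two ceilings give the upper bound. Proposition~\ref{prop:theta} then expresses $\theta_{\mathrm{eff}}$ in terms of $\xi$ and $\kappa$.

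The first step establishes the feasibility condition. Separation needs a commitment $L$ that clears the floor $L \ge v/\theta_{\mathrm{eff}}$ of \eqref{eq:floor} and stays within both ceilings, $L \le \bar L$ and $L \le mv$. Such an $L$ exists if and only if \eqref{eq:three} holds. Dividing by $v>0$ and using $\ell=\bar L/v$ gives $1/\theta_{\mathrm{eff}} \le \min\{\ell,m\} = M$, which is \eqref{eq:compact} with $\theta_{\mathrm{eff}}$ in place of $\theta$. Both sides are positive, so this is equivalent to $\theta_{\mathrm{eff}} \ge 1/M$.

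One point needs care here: whether "feasible" should mean that $\max\{L^*(s), v/\theta_{\mathrm{eff}}\}$ fits under the ceilings for all types. Following Lemma~\ref{lem:reconcile}, I would read the proposition as the condition under which the floor itself is implementable. This is the binding requirement the statement isolates, and the schedule part $L^*$ is treated separately.

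Next, substitute $\theta_{\mathrm{eff}} = \theta_0(1-\xi\kappa)$ from \eqref{eq:thetaeff}. Write $\theta_0$ for the base provability at the relevant $g$; with $\nu=0$ it is constant, and otherwise the statement is read pointwise in $g$. Since $\theta_0>0$, dividing gives $1-\xi\kappa \ge 1/(M\theta_0)$, which rearranges to the first display.

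For the "every capability level" claim, use Proposition~\ref{prop:theta} and Assumption~\ref{ass:shift}. The left side $\xi\kappa(g)$ is nondecreasing in $g$, and its supremum is $\xi\bar\kappa$. The condition holds for all $g$ if and only if $\sup_g \xi\kappa(g) \le 1-1/(M\theta_0)$. The "if" direction follows because each $\kappa(g)\le\bar\kappa$. The "only if" direction follows because a strict violation at the supremum transfers to some finite $g$, since $\kappa(g)\to\bar\kappa$. This yields \eqref{eq:survival}.

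The main obstacle is the boundary case. With the weak inequality, if $\kappa(g)<\bar\kappa$ for all $g$, then the condition can hold at every finite $g$ even when $\xi\bar\kappa$ equals the threshold exactly. It can also hold in the limit case. I would therefore state the equivalence for the condition with $\kappa\to\bar\kappa$ included, exactly as the statement's gloss "as $\kappa\to\bar\kappa$" indicates. On that reading the limit is covered by closedness of the weak inequality. I would also remark that with $\nu>0$ one must take $\theta_0$ at its limiting value, since $\theta_0(g)$ rises with $g$ while $\kappa$ rises too.

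Finally, suppose $M\theta_0\le 1$. Then the right side $1-1/(M\theta_0)\le 0$. Feasibility at $\xi=0$ requires $\theta_0\ge 1/M$, which holds only with equality when $M\theta_0=1$. I would note this knife-edge: at equality, separation is feasible exactly at $\xi=0$ and nowhere else. The strict case $M\theta_0<1$ gives infeasibility outright. So the last sentence of the statement should be read with strict inequality, or with the equality case counted as degenerate.
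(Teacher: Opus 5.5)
Your proof is correct and follows the paper's own argument step for step: the floor $v/\theta_{\mathrm{eff}}$ against $\min\{\bar L, mv\}$, division by $v$, substitution of $\theta_{\mathrm{eff}}=\theta_0(1-\xi\kappa)$, and passage to $\sup_g\kappa(g)=\bar\kappa$ (with a weak inequality the boundary worry you raise does not arise, since $\xi\kappa(g)\le 1-1/(M\theta_0)$ for all $g$ is literally equivalent to $\xi\bar\kappa\le 1-1/(M\theta_0)$). Your knife-edge remark is right and corrects the paper, whose proof asserts that for $M\theta_0\le 1$ no $\xi\ge 0$ works although $\xi=0$ does at $M\theta_0=1$, so the last clause should read $M\theta_0<1$; your aside that for $\nu>0$ one should use the limiting $\theta_0$ gives only a necessary condition, because $\theta_0(g)(1-\xi\kappa(g))$ need not attain its infimum as $g\to\infty$.
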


Table~\ref{tab:threshold} evaluates \eqref{eq:survival} at $\bar\kappa = 1$; if multiplicity
holds $\bar\kappa$ below one \citep{gorecki2025}, each entry scales up by $1/\bar\kappa$. The
threshold is steeply increasing in
$M$ over $[1,2]$ and flattens above $M = 3$. This region matters because it is where the
institutions sit: US doctrine puts $M \approx 1$, Austrian moderation $M \approx 1.5$, and --- as
Section~\ref{sec:regimes} shows --- the post-exclusion solvency ratio puts $M$ between $0.3$ and
$1.5$ for mid-size and large engagements.

\begin{table}[t]\centering\small
\caption{Critical model sharing $\xi^{*} = 1-1/(M\theta_0)$, evaluated at $\bar\kappa = 1$;
entries scale as $1/\bar\kappa$. ``n.f.'' denotes infeasible even at $\xi = 0$.}\label{tab:threshold}
\begin{tabular}{@{}lrrrrr@{}}
\toprule
$M = \min\{\ell,m\}$ & $\theta_0 = 0.70$ & $0.78$ & $0.82$ & $0.85$ & $0.95$\\
\midrule
1.0  & n.f.  & n.f.  & n.f.  & n.f.  & n.f.\\
1.2  & n.f.  & n.f.  & n.f.  & 0.020 & 0.123\\
1.5  & 0.048 & 0.145 & 0.187 & 0.216 & 0.298\\
2.0  & 0.286 & 0.359 & 0.390 & 0.412 & 0.474\\
3.0  & 0.524 & 0.573 & 0.593 & 0.608 & 0.649\\
5.0  & 0.714 & 0.744 & 0.756 & 0.765 & 0.789\\
10.0 & 0.857 & 0.872 & 0.878 & 0.882 & 0.895\\
\bottomrule
\end{tabular}
\end{table}

\begin{corollary}[Verifier quality is the wrong margin]\label{cor:nu}
Since $\theta_{\mathrm{eff}} \to \theta_0(1-\xi\bar\kappa)$ as $\kappa \to \bar\kappa$, even
$\theta_0 = 1$ leaves $\theta_{\mathrm{eff}} \ge 1-\xi\bar\kappa$ and hence requires
$M \ge 1/(1-\xi\bar\kappa)$. At $\bar\kappa = 1$, $\xi = 0.5$ needs a ceiling of 2 and
$\xi = 0.75$ a ceiling of 4, whatever the quality of the forensic apparatus.
\end{corollary}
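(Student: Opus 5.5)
The plan is to read the corollary directly off Proposition~\ref{prop:theta} and Proposition~\ref{prop:survival}, with no new machinery. First, the limit. By Assumption~\ref{ass:shift}, $\kappa(g)$ is increasing, so it converges to $\bar\kappa=\sup_g\kappa(g)$; under the parametrisation of Remark~\ref{rem:ident} this limit is explicit. Substituting $\kappa=\bar\kappa$ into \eqref{eq:thetaeff} gives $\theta_{\mathrm{eff}}\to\theta_0(1-\xi\bar\kappa)$. Here $\theta_0$ is to be read as the limiting base provability $\theta_0(g)\to\theta_0+\nu(1-\theta_0)\le 1$. The only point needing care is that $\theta_0(g)$ may also move with $g$. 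For the corollary this is harmless, because the claim concerns the best case $\theta_0=1$.

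Second, the bound. Since $\theta_0(g)\le 1$ and $1-\xi\kappa(g)\ge 1-\xi\bar\kappa$, the factor $1-\xi\kappa$ is the only channel that depends on sharing. Setting $\theta_0=1$ gives the most favourable provability, $\theta_{\mathrm{eff}}=1-\xi\kappa$. At every capability level this is at least $1-\xi\bar\kappa$, with the infimum attained in the limit. The same product form shows that no choice of $\theta_0\le1$ can raise $\theta_{\mathrm{eff}}$ above $1-\xi\kappa$. This upper reading is the one that carries the corollary's message: the forensic apparatus cannot push provability past the sharing factor.

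Third, the implied ceiling. I would invoke the survival condition \eqref{eq:survival} with $\theta_0=1$, which reads $\xi\bar\kappa\le 1-1/M$. Rearranging gives $M\ge 1/(1-\xi\bar\kappa)$ for $\xi\bar\kappa<1$. When $\xi\bar\kappa=1$, separation fails for every finite $M$. The numerical cases then follow by evaluation at $\bar\kappa=1$: $\xi=0.5$ gives $M\ge 2$, and $\xi=0.75$ gives $M\ge 4$. Finally, because any $\theta_0<1$ only tightens the requirement to $M\ge 1/(\theta_0(1-\xi\bar\kappa))$, these ceilings are necessary whatever the quality of the forensic apparatus.

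The main obstacle is one of interpretation rather than calculation. The corollary's ``$\theta_{\mathrm{eff}}\ge 1-\xi\bar\kappa$'' holds exactly only at $\theta_0=1$, where it is a lower bound along the path. For the ``wrong margin'' claim, what matters is the upper bound $\theta_{\mathrm{eff}}\le 1-\xi\bar\kappa$ in the limit. I would therefore state both directions explicitly, so that the necessity of $M\ge 1/(1-\xi\bar\kappa)$ rests on the upper bound.
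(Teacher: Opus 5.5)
Your proposal is correct and is essentially the paper's argument. The paper justifies the corollary in one line: $\nu$ (forensic quality) only scales $\theta_0(g)\le 1$ and leaves the bracket $1-\xi\kappa$ in \eqref{eq:thetaeff} untouched, so combining the limit $\kappa\to\bar\kappa$ with the survival condition gives at best $M\ge 1/(1-\xi\bar\kappa)$, exactly as you do. Your added observation is also right: the necessity of this ceiling rests on the limit value $\theta_0(1-\xi\bar\kappa)\le 1-\xi\bar\kappa$, not on the stated lower bound $\theta_{\mathrm{eff}}\ge 1-\xi\bar\kappa$, and this correctly tightens the corollary's loose wording.
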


Corollary~\ref{cor:nu} follows because $\nu$ multiplies $\theta_0$ and leaves the bracket
in \eqref{eq:thetaeff} unchanged: a verifier that is down is down regardless of its quality. It
inverts the natural policy recommendation, and \citet{denisov2026} supply the independent
technical statement of the same point.

\subsection{The demand side: what the client still pays for}\label{sec:demand}

The stakes $\chi$ riding on the signal have so far been a primitive, inherited from
\citet{bauer2026b}. They can be microfounded from the same client stage, and doing so
reveals that $\xi\kappa$ acts on the signal twice --- once on its cost, once on its demand
--- with a sign that depends on which ceiling binds. A client with ticket $v$ values a
skilled provider for the losses it avoids, \emph{net of what compensation would have
replaced anyway}: with expected net recovery $R = \theta_{\mathrm{eff}}L - c_0$ per
unrescued failure, the stakes on the skill distinction become
\begin{equation}
\chi_{\mathrm{eff}} \;=\; \chi\Bigl(1 - \frac{\theta_{\mathrm{eff}}L}{v} + \frac{c_0}{v}\Bigr).
\label{eq:chieff}
\end{equation}
Two regimes follow. Where the compensation floor binds, $L = v/\theta_{\mathrm{eff}}$,
the bracket collapses to $c_0/v$: full compensation crowds out the premium for
competence --- the warranty logic of \citet{spence1977} --- and only the enforcement
wedge $c_0$ keeps the client caring who does the work. Where a ceiling binds instead,
$L = \bar L < v/\theta_{\mathrm{eff}}$, the bracket \emph{rises} as
$\theta_{\mathrm{eff}}$ falls: in the calibration ($v = 1$, $c_0 = 0.1$, $\bar L = 1$)
it moves from $0.30$ at $\theta_{\mathrm{eff}} = 0.80$ to $0.86$ at
$\theta_{\mathrm{eff}} = 0.24$.

\begin{corollary}[Two margins of collapse]\label{cor:demand}
Under a binding solvency or penalty ceiling, model sharing destroys the signal from the
cost side --- the separating pledge scales as $1/\theta_{\mathrm{eff}}$ --- while the
demand for exactly this certification grows. The market that collapses is one whose
surplus is increasing as it collapses, so a cost-side account alone understates the
welfare loss. At the compensation floor the signal is instead doubly dead: too expensive
to send and nearly worthless to receive.
\end{corollary}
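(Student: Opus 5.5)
The statement to prove is Corollary~\ref{cor:demand}. I would argue from \eqref{eq:chieff} directly, treating the implemented commitment $L$ as fixed by whichever constraint binds. I split into the two regimes named in the text.

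\textbf{Ceiling regime.} Here $L = \min\{\bar L, mv\} < v/\theta_{\mathrm{eff}}$ is independent of $\theta_{\mathrm{eff}}$. The steps are:
\begin{itemize}
\item[(i)] Cost side. The pledge the signal must carry is $v/\theta_{\mathrm{eff}}$ by Lemma~\ref{lem:floor}. By Proposition~\ref{prop:theta}, $\theta_{\mathrm{eff}}$ is strictly decreasing in $\xi$ and $\kappa$, so this required pledge is strictly increasing. Once it exceeds $vM$, Proposition~\ref{prop:survival} gives infeasibility of separation.
\item[(ii)] Demand side. Differentiate \eqref{eq:chieff}. With $L$ fixed,
\[
\frac{\partial \chi_{\mathrm{eff}}}{\partial \theta_{\mathrm{eff}}} = -\chi L/v < 0 .
\]
Composing with $\partial\theta_{\mathrm{eff}}/\partial\xi = -\theta_0\kappa < 0$ gives $\partial\chi_{\mathrm{eff}}/\partial\xi > 0$, and the same holds in $\kappa$, hence in $g$ under Assumption~\ref{ass:shift}.
\item[(iii)] Numerical check. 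At $v=1$, $c_0=0.1$, $L=\bar L=1$, the bracket in \eqref{eq:chieff} is $1.1-\theta_{\mathrm{eff}}$. This gives $0.30$ at $\theta_{\mathrm{eff}}=0.80$ and $0.86$ at $\theta_{\mathrm{eff}}=0.24$, confirming the text.
\end{itemize}
The welfare statement, that a cost-side account understates the loss, then follows. The lost surplus from failed separation is proportional to $\chi_{\mathrm{eff}}$, which is increasing along exactly the path on which (i) drives separation to infeasibility.

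\textbf{Floor regime.} Substituting $L = v/\theta_{\mathrm{eff}}$ into \eqref{eq:chieff} collapses the bracket to $c_0/v$. This is small for $v$ large relative to the enforcement wedge, which gives "nearly worthless to receive". "Too expensive to send" is the floor cost itself, $\theta_{\mathrm{eff}}\pi(1-\rho)L = \pi(1-\rho)v$. This is full expected loss, invariant to $\theta_{\mathrm{eff}}$, and from Lemma~\ref{lem:reconcile} it induces pooling on $[s_F,\tilde s]$.

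The step I expect to be the main obstacle is making "surplus is increasing as it collapses" precise. $\chi_{\mathrm{eff}}$ is a demand-side stake, not realized surplus. I would first try defining welfare loss as $\chi_{\mathrm{eff}}$ times the measure of types that fail to separate, and show both factors are weakly increasing in $\xi\kappa$ under a binding ceiling. Failing that, I would state the claim as monotonicity of $\chi_{\mathrm{eff}}$ alone, read as the value forgone.
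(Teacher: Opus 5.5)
Your proposal is correct and is essentially the paper's own argument. The paper also substitutes the binding $L$ into \eqref{eq:chieff}: the bracket collapses to $c_0/v$ at the compensation floor, and under a ceiling it equals $1-\theta_{\mathrm{eff}}L/v+c_0/v$, which decreases in $\theta_{\mathrm{eff}}$ (with the same $0.30$ and $0.86$ check), while the welfare clause is read, as in your fallback, as the rise of $\chi_{\mathrm{eff}}$ taken as value forgone. The obstacle you flagged is milder than you expect: under a binding ceiling $v/\theta_{\mathrm{eff}} > vM$ already holds, so by Proposition~\ref{prop:survival} and Lemma~\ref{lem:reconcile} no type separates, the measure of non-separating types is the whole type space and does not move with $\xi$, and your option (a) therefore reduces to option (b).
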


The alternative reading --- that unprovability makes \emph{all} promises of the
relationship cheap talk, so the extractable share $\zeta$ and hence
$\chi_{\mathrm{eff}}$ \emph{fall} with $\theta_{\mathrm{eff}}$ --- is institutional
rather than derivable from the client stage, and the two are empirically separable:
they predict opposite movements of competence premia in segments where verifiability
deteriorates. We leave the discrimination to data; either way, nothing in
Sections~\ref{sec:regimes}--\ref{sec:scope} depends on $\chi_{\mathrm{eff}}$, which
prices the signal's value, not its feasibility.

\section{Which ceiling binds}\label{sec:regimes}

\begin{proposition}[Regime partition]\label{prop:regime}
The parameter space partitions at $\ell = m$. For $\ell > m$ the market is
\emph{doctrine-bound}: $M = m$, and $\xi^{*}$ is independent of the provider's capital. For
$\ell < m$ it is \emph{solvency-bound}: $M = \ell = \bar L/v$, and $\xi^{*}$ falls in the
engagement value. The switch occurs at the ticket size
\begin{equation}
v^{*} = \bar L/m .
\label{eq:vstar}
\end{equation}
Consequently $v_{\max} = \bar L\,\theta_{\mathrm{eff}}$ is the largest engagement that can be
signalled at all, provided the doctrine condition $m\theta_{\mathrm{eff}} \ge 1$ holds; if it
fails, no ticket size works.
\end{proposition}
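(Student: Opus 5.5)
The plan is to derive everything from the compact separation condition \eqref{eq:compact} together with the survival threshold of Proposition~\ref{prop:survival}. The argument has three parts: the partition at $\ell=m$, the comparative statics of $\xi^{*}=1-1/(M\theta_0)$ in each regime, and the ticket-size bounds.

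\emph{Partition and the two regimes.} Because $M=\min\{\ell,m\}$ is a minimum of two numbers, it equals $m$ when $\ell>m$ and equals $\ell$ when $\ell<m$; the two coincide on the boundary $\ell=m$.
\begin{itemize}
\item In the doctrine-bound region, $\xi^{*}=1-1/(m\theta_0)$. This involves only $m$ and $\theta_0$, so $\bar L$ does not enter, and $\xi^{*}$ is locally independent of capital: any perturbation of $\bar L$ that keeps $\ell>m$ leaves $\xi^{*}$ unchanged. This is the formal content of Remark~\ref{rem:order}.
\item In the solvency-bound region, $\xi^{*}=1-v/(\bar L\theta_0)$. Its derivative in $v$ is $-1/(\bar L\theta_0)<0$, so $\xi^{*}$ falls strictly in $v$.
\end{itemize}
Since $\ell=\bar L/v$ is strictly decreasing in $v$, the boundary $\ell=m$ corresponds to the single ticket size $v^{*}=\bar L/m$. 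Engagements with $v<v^{*}$ are doctrine-bound and those with $v>v^{*}$ are solvency-bound, which gives \eqref{eq:vstar}.

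\emph{Ticket-size bounds.} Return to the unnormalised condition \eqref{eq:three}, $v/\theta_{\mathrm{eff}}\le\min\{\bar L,mv\}$, and split the minimum into its two inequalities.
\begin{itemize}
\item The solvency inequality $v/\theta_{\mathrm{eff}}\le\bar L$ is equivalent to $v\le\bar L\theta_{\mathrm{eff}}$. This is monotone in $v$ and defines $v_{\max}$.
\item The doctrine inequality $v/\theta_{\mathrm{eff}}\le mv$ reduces, after dividing by $v>0$, to $m\theta_{\mathrm{eff}}\ge1$. It does not depend on $v$.
\end{itemize}
Hence the feasible set of ticket sizes is the interval $(0,\bar L\theta_{\mathrm{eff}}]$ when $m\theta_{\mathrm{eff}}\ge1$, and it is empty otherwise. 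This is the final sentence of the statement.

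\emph{Consistency check.} When $m\theta_{\mathrm{eff}}\ge1$ we have $v_{\max}=\bar L\theta_{\mathrm{eff}}\ge\bar L/m=v^{*}$. So the maximum ticket size always lies in the solvency-bound region, and the doctrine constraint never truncates the interval from above.

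\emph{Main obstacle.} The algebra is routine; the delicate step is interpretive. The statement treats $\theta_{\mathrm{eff}}$ as given at a fixed capability $g$, whereas $\xi^{*}$ is defined through the limit $\kappa\to\bar\kappa$. I will state explicitly that $v_{\max}$ is evaluated at the current $\theta_{\mathrm{eff}}(g)$ from Proposition~\ref{prop:theta}. Along the capability path it therefore shrinks to $\bar L\theta_0(1-\xi\bar\kappa)$, and I will note this without claiming more. I will also treat the knife-edge $\ell=m$, where both ceilings bind simultaneously, as belonging to either regime.
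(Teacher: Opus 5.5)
Your proposal is correct and follows the paper's proof essentially step for step. The shared steps are these: $M=\min\{\ell,m\}$ with $\ell=\bar L/v$ strictly decreasing in $v$ gives the switch at $v^{*}=\bar L/m$. The two branches of $\xi^{*}$ give independence from capital and strict monotonicity in $v$. Splitting \eqref{eq:three} into its solvency and doctrine halves yields $v\le\bar L\theta_{\mathrm{eff}}$ and the $v$-independent condition $m\theta_{\mathrm{eff}}\ge 1$.

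Two differences are worth recording, and both favour your version. You evaluate the bound directly at the current $\theta_{\mathrm{eff}}$, which matches the statement. The paper's proof says ``setting $\xi=0$'', which taken literally would give $\bar L\theta_0$ rather than $\bar L\theta_{\mathrm{eff}}$. Your added check that $v_{\max}\ge v^{*}$ whenever $m\theta_{\mathrm{eff}}\ge 1$ is also correct, and the paper does not make it.
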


Proposition~\ref{prop:regime} has an immediate reading. Small engagements are constrained by law:
the provider has ample capacity relative to the ticket, and the question is what a court will
enforce. Large engagements are constrained by capital: the law would enforce more than the
provider can credibly promise. The signalling literature has treated the ceiling as a single
object; the judgment-proof literature \citep{shavell1986,summers1983,che2008} has treated the
capital constraint in isolation from information. The regime map is where they meet.

\begin{figure}[t]\centering
\includegraphics[width=\textwidth]{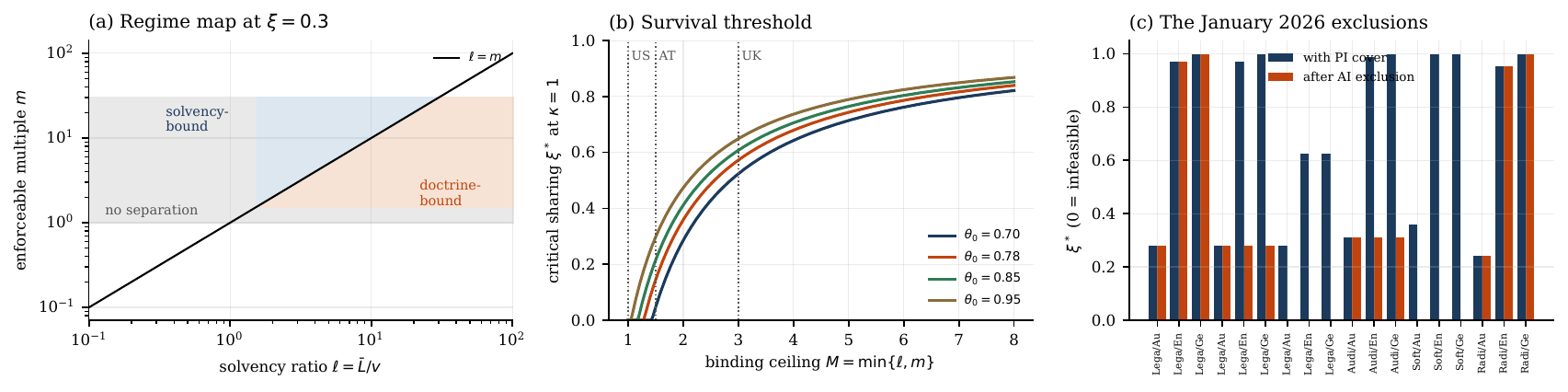}
\caption{(a) The regime partition in $(\ell, m)$ space at $\xi = 0.3$, $\theta_0 = 0.80$,
$\kappa = 0.60$; the diagonal is $\ell = m$. (b) The survival threshold as a function of the
binding ceiling, with the jurisdictional values marked. (c) Critical model sharing by
engagement--jurisdiction cell, with and without responding professional indemnity cover; zero
denotes infeasibility at any $\xi$.}\label{fig:regimes}
\end{figure}

\subsection{The January 2026 exclusions}

Credible capacity in professional services is supplied predominantly by the professional
indemnity policy limit rather than the balance sheet, and that limit has come under attack.
ISO/Verisk endorsements CG 40 47, CG 40 48 and CG 35 08 (edition 01 26) exclude
generative-AI-related liability from commercial general liability cover and became available for
attachment from 1 January 2026; W.R.\ Berkley form PC 51380 (06-24) is an absolute AI exclusion
in professional and management lines. We model this as a fall in $\bar L$ from the insured level
to own capital, set at 15\% of the insured figure. The share is an assumption; the robustness
band is reported below Corollary~\ref{cor:shift}. The exclusion forms themselves do not state
correlated model error as the ground; the closest wording-level framing is the silent-cyber
analogy --- unpriced aggregation risk inside legacy wordings. Market analysis has, however, begun
to name the mechanism: \citet{leung2026}, mapping the emerging insurability frontier of AI risk,
identify foundation-model concentration as its most genuinely novel feature, because the failure
of a shared upstream model produces correlated losses across the book rather than independent
ones. This is consistent with, though it does not by itself confirm, the mechanism of
Section~\ref{sec:collapse}.

\begin{table}[h]\centering\small
\caption{Regime by engagement and jurisdiction. ``D'' denotes doctrine-bound, ``S''
solvency-bound. Left block: professional indemnity cover responds. Right block: after
generative-AI exclusion.}\label{tab:regimemap}
\begin{tabular}{@{}lrrcccccccc@{}}
\toprule
& & & \multicolumn{4}{c}{with PI cover} & \multicolumn{4}{c}{after exclusion}\\
\cmidrule(lr){4-7}\cmidrule(lr){8-11}
Engagement & $\bar L$ (m) & $v$ (m) & US & AT & UK & DE & US & AT & UK & DE\\
\midrule
Legal, small matter   & 3  & 0.025 & D & D & D & D & D & D & D & D\\
Legal, mid matter     & 5  & 0.5   & D & D & D & D & D & D & S & S\\
Legal, large matter   & 20 & 10    & D & D & S & S & S & S & S & S\\
Audit, mid client     & 10 & 1     & D & D & D & D & D & D & S & S\\
Software, mid project & 2  & 0.3   & D & D & D & S & D & S & S & S\\
Radiology, per study  & 5  & 0.05  & D & D & D & D & D & D & D & D\\
\bottomrule
\end{tabular}
\end{table}

\begin{corollary}[The exclusions move the regime boundary]\label{cor:shift}
In the calibration of Table~\ref{tab:regimemap}, three of twenty-four cells are solvency-bound
when cover responds and eleven after exclusion. The cells that survive $\xi = 0.2$ fall from
eighteen to twelve, and those surviving $\xi = 0.4$ from twelve to four.
\end{corollary}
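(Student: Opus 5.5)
The corollary is a finite computation on the calibration, so the plan is to verify it cell by cell. Proposition~\ref{prop:regime} classifies each cell, and Proposition~\ref{prop:survival} at $\bar\kappa=1$ decides survival. No new argument is needed beyond the earlier results; the work is in fixing the inputs and applying them consistently.

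\textbf{Step 1: fix the inputs.} For each of the six engagement types I take the solvency ratio $\ell=\bar L/v$ from the $\bar L$ and $v$ columns of Table~\ref{tab:regimemap}. With cover this gives
\[
\ell = 120,\ 10,\ 2,\ 10,\ 6.67,\ 100 .
\]
After exclusion $\bar L$ falls to $15\%$ of the insured figure, giving
\[
\ell = 18,\ 1.5,\ 0.3,\ 1.5,\ 1,\ 15 .
\]
For the jurisdictional multiples I use Section~3.3: $m_{\mathrm{US}}\approx 1$, $m_{\mathrm{AT}}\approx 1.5$, and the English centre value $m_{\mathrm{UK}}=3$. For $m_{\mathrm{DE}}$ I take the negotiated-B2B value from Appendix~\ref{app:prov}. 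The table's own entries pin it to the interval $(6.67,10]$: software is solvency-bound in Germany with cover, while the legal mid matter and the audit are doctrine-bound there.

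\textbf{Step 2: classify the cells.} In each cell I compare $\ell$ with $m$ via Proposition~\ref{prop:regime}. With cover, only the legal large matter in the UK and DE ($\ell=2<3$ and $2<m_{\mathrm{DE}}$) and software in DE are S, which gives three cells. After exclusion the S cells are:
\begin{itemize}
\item the legal mid matter and the audit in the UK and DE, where $\ell=1.5$;
\item the legal large matter in all four jurisdictions, where $\ell=0.3<1$;
\item software in AT, UK and DE, where $\ell=1$.
\end{itemize}
That is $2+4+2+3=11$ cells; the difference from the covered case is the eight cells reported in the abstract. The small legal matter and radiology remain D everywhere, since $\ell\ge 15$.

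\textbf{Step 3: count survivors.} For each cell I compute $M=\min\{\ell,m\}$ and $\xi^*=1-1/(M\theta_0)$, setting $\xi^*=0$ when $M\theta_0\le 1$ as in Figure~\ref{fig:regimes}(c). I then count the cells with $\xi^*\ge 0.2$ and with $\xi^*\ge 0.4$, before and after exclusion. The structural argument can be made without the full $\theta_0$ table:
\begin{itemize}
\item \emph{US cells are dead in both blocks}, since $M\le 1$ and $\theta_0\le 1$.
\item \emph{AT cells have $M\le 1.5$}, so by Table~\ref{tab:threshold} $\xi^*\le 0.298$. They can survive $\xi=0.2$ only for high $\theta_0$ and never survive $0.4$.
\item \emph{Survival at $0.4$ requires $M\ge 2$ at $\theta_0\le 0.95$.} With cover this holds for the UK and DE cells other than those with $\ell=2$, or with $\ell=2$ at high $\theta_0$. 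After exclusion it holds only in the UK/DE cells whose ceiling is still doctrine, namely the small legal matter and radiology: exactly $2\times 2=4$ cells.
\end{itemize}
This already reproduces the count of four. For twelve with cover, I expect the eight UK/DE cells with $M\ge 3$ at $\theta_0$ near $0.85$, plus a few $M=2$ cells, to clear $0.4$. I would confirm the exact total against the per-engagement $\theta_0$ values in the appendix.

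\textbf{Main obstacle.} The hard part is that the statement depends on calibration data not stated in the text: the engagement-specific $\theta_0$ and the German $m$. The $\xi=0.2$ counts (eighteen and twelve) in particular hinge on whether AT cells with $M=1.5$ and UK cells with $M\le 1.5$ clear $0.2$, which requires $\theta_0\ge 0.834$. I would first read those values from Appendix~\ref{app:data}. I would also adopt the tie convention that $\ell=m$ is doctrine-bound, which is needed for the AT audit and mid-legal cells at $\ell=1.5$. I would then recompute the survivor counts directly, flagging as a sensitivity check any cell whose $\xi^*$ lies within rounding of $0.2$ or $0.4$.
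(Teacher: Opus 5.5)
Your Steps 1--2 are correct and reproduce Table~\ref{tab:regimemap}. The tie convention $\ell=m\Rightarrow$ D is also what keeps three further cells doctrine-bound: the DE legal-mid and audit cells with cover ($\ell=m_{\mathrm{DE}}=10$) and the post-exclusion US software cell ($\ell=m_{\mathrm{US}}=1$).

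The gap is in Step 3. You decide survival with the $\bar\kappa=1$ criterion $\xi\le 1-1/(M\theta_0)$. The corollary's counts, however, are evaluated at $\bar\kappa=\kappa_0=0.60$, as stated at the end of Appendix~\ref{app:prov}. That appendix also gives the engagement-level $\theta_0$ you were looking for (legal $0.80$, audit $0.82$, software $0.85$, radiology $0.78$) and $m_{\mathrm{DE}}=10$.

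With those inputs your criterion does not deliver the statement:
\begin{itemize}
\item With cover, the six AT cells have $M=1.5$ and $\xi^{*}=0.167,0.167,0.167,0.187,0.216,0.145$, so only software clears $0.2$.
\item Also with cover, the UK/DE legal-large cells ($M=2$) have $\xi^{*}=0.375<0.4$.
\item After exclusion, every $M=1.5$ cell and every AT cell falls below $0.2$.
\end{itemize}
You would obtain $13\to4$ survivors at $\xi=0.2$ and $10\to4$ at $\xi=0.4$; only the final four matches. The hinge you flagged, $\theta_0\ge0.834$ for $M=1.5$, is exactly where it breaks, since only software meets it. Your tally of ``eight UK/DE cells with $M\ge3$'' is also off: there are ten, and twelve is those ten plus the two $M=2$ legal-large cells.

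The fix is to apply Proposition~\ref{prop:survival} as $\xi\bar\kappa\le1-1/(M\theta_0)$ with $\bar\kappa=0.6$. This means $M\theta_0\ge1/0.88\approx1.136$ at $\xi=0.2$ and $M\theta_0\ge1/0.76\approx1.316$ at $\xi=0.4$, or equivalently tabulated $\xi^{*}\ge0.12$ and $\xi^{*}\ge0.24$. The counts then fall out directly.
\begin{itemize}
\item With cover: US cells have $M\theta_0<1$. AT cells have $M\theta_0=1.5\theta_0\in[1.17,1.275]$, clearing $0.2$ but not $0.4$. UK/DE cells all have $M\theta_0\ge1.6$ and clear both. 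This gives $18$ and $12$.
\item After exclusion: legal-large ($M\theta_0=0.24$) and software ($0.85$) are infeasible everywhere. The $M=1.5$ legal-mid and audit cells in AT/UK/DE ($1.2$, $1.23$) and the AT legal-small and radiology cells ($1.2$, $1.17$) clear only $0.2$. UK/DE legal-small and radiology clear both. This gives $12$ and $4$.
\end{itemize}

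Your claim that AT and $M=1.5$ cells never survive $\xi=0.4$ still holds after this change. It holds only because every $\theta_0\le0.85<1.316/1.5\approx0.877$, not because of anything read off Table~\ref{tab:threshold}. Choosing $\bar\kappa=1$ is the natural reading of Remark~\ref{rem:ident}, but this corollary departs from that benchmark. Its counts hold only for $\bar\kappa$ up to about $0.72$.
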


Substituting the occupation-level verifiabilities of \citet{bauer2026c} for the
engagement-type values --- the wrong object, as Appendix~B explains, but the natural
robustness check --- moves the counts to twelve and ten (insured, $\xi = 0.2$ and $0.4$) and
seven and four (post-exclusion): every flip but one is a legal cell, through that paper's
causation discount, and the headline post-exclusion count of four cells at $\xi = 0.4$ is
invariant. The own-capital share behind these counts is an assumption, so we report the band. Over shares
of 5--25 per cent, the post-exclusion solvency-bound count stays between ten and eighteen of
twenty-four, against three with cover, so the regime shift is not an artefact of the 15 per cent
figure. The count of cells surviving $\xi = 0.4$ is more sensitive: four at shares up to 15 per
cent, ten at 25 per cent. The qualitative statement --- the exclusions move the binding
constraint --- is robust; the severity is not.

Two features of Corollary~\ref{cor:shift} are worth isolating. First, the common-law cells are
largely unmoved, because they were already doctrine-bound and, at $m \approx 1$, already
infeasible: $M\theta_0 < 1$ regardless of capital. Second, the cells that move are precisely the
civil-law B2B cells, where \S\,348 HGB leaves $m$ large and the binding ceiling was therefore the
capital constraint all along. The insurance withdrawal is a civil-law event.

This is the sense in which the framing correction matters. An account that attributes market
failure to the collapse of provability is right about the mechanism in the doctrine-bound region
and incomplete in the solvency-bound region, where a provider with adequate provability may still
be unable to post a credible commitment. Both channels are real; the regime map says where each
applies.

\section{Scope, robustness and falsification}\label{sec:scope}

\paragraph{Where the mechanism does not apply.} Where verification is deterministic --- compilation,
type checking, formal proof, reconciliation against an independent ledger --- the verifier does
not draw on the generator's model and $\xi$ is structurally near zero. The mechanism therefore
does not apply to that part of the verification problem. It applies to the part that is not
deterministic, which is larger than it appears: \citet{yang2024} find that for 87.13\% of defects
LLMs generate no valid unit test at all, and that among the remainder only 47.28\% of defects are
detected (their Finding~8 and Table~6). Compilation catches syntax and types, not semantics. In law, medicine and audit the
review layer is human or model-based throughout. We restrict the claim to credence-goods services
in which ex-post review is judgmental rather than mechanical.

\paragraph{The empirical anchor.} $\kappa_0 = 0.60$ is taken from \citet{kim2025}'s
``agreement rate when both models err'' --- the mean across model pairs on the Helm leaderboard,
against a random-choice baseline near one third; the companion value on HuggingFace is $0.423$.
The symbol is this model's common-share parameter, not a chance-corrected kappa coefficient, and
the statistic is an order of magnitude for that share, not an estimate of $\pi_c/\pi$: it is
benchmark-specific, which is one reason the sensitivity band below is wide. Over
$\kappa_0 \in [0.2, 0.8]$ the critical capability falls monotonically by roughly half in our
calibration. Direction is robust; level is not.

\paragraph{The enforceable multiple.} We report $m$ as a range, not a point. An earlier version
fixed $m = 1.2$ for England, inherited from \citet{bauer2026c} without re-derivation and used one
sentence after the text itself observed that no fixed multiple is defensible. That value should
not be used; the second version of \citet{bauer2026c} corrects it to the same central multiple
and range adopted here. It matters: at $\theta_0 = 0.85$ the threshold moves from $0.020$ at $M=1.2$ to
$0.412$ at $M=2$. The secondary-obligation carve-out in \emph{Cavendish} makes even the range
uncertain, since a commitment drafted as a primary obligation escapes the doctrine.

\paragraph{Insurance adoption.} Adoption is now partially documented: industry filing reports
indicate that by spring 2026 major carriers --- among them W.R.\ Berkley, Chubb, Travelers,
Berkshire Hathaway and Cincinnati Financial --- had filed to adopt the ISO endorsements or
proprietary AI exclusion wording, with the large majority of state filings approved, and with
isolated non-adoption notices on the other side. Attachment to an individual policy remains
discretionary, so Corollary~\ref{cor:shift} should still be read as the consequence of full
attachment, an upper bound on the effect. On the other side of the ledger, a standalone
affirmative market has begun to form --- dedicated AI-liability programmes at Lloyd's and at
Munich Re, mapped by \citet{leung2026} and differentiated by peril --- but per-insured limits
remain in the single to low tens of millions of dollars, so the fourth falsification condition
below is not met.

\paragraph{The enforceable multiple is exogenous here, and probably is not.}
Post-\emph{Cavendish} the enforceable multiple depends on the legitimate interest the clause
protects, and in \emph{ParkingEye} that interest was the operation of a parking scheme rather
than compensation for loss. If a provider can establish that preserving human fallback capability
is itself a legitimate commercial interest --- and the argument is not obviously weak, since the
capability is what the client is buying --- then $m$ rises with the value of the very attribute
the commitment certifies. Condition~\eqref{eq:survival} would then be a fixed point rather than
an inequality in an exogenous parameter, and the fixed point may be self-sustaining: a larger
protected interest supports a larger enforceable multiple, which supports separation, which
preserves the interest. This cuts against our own conclusion. Where the protected interest is
large, the doctrine-bound cells of Table~\ref{tab:regimemap} would be more permissive than we
report, and the common-law markets we describe as already at the boundary might not be. We treat
$m$ as exogenous because we are not aware of any decision testing whether the maintenance of
professional capability qualifies as a legitimate interest under the \emph{Cavendish} test, and
because endogenising it changes the character of the problem rather than the calibration. It is,
in our view, the most promising extension of this framework.

\paragraph{Provability is treated as independent of the commitment.} We take
$\theta_{\mathrm{eff}}$ as independent of $L$. It probably is not: larger stakes induce more
investment in establishing the case --- verifiability made endogenous by contracting effort, as
in \citet{kvaloy2009} --- so $\theta_{\mathrm{eff}}'(L) > 0$ and \eqref{eq:floor}
becomes the fixed point $L = v/\theta_{\mathrm{eff}}(L)$. The direction of the bias is
determinate --- endogenous provability makes the floor easier to satisfy, so we overstate the
constraint --- but the magnitude is not, and the fixed point may fail to exist where
$\theta_{\mathrm{eff}}$ is bounded away from one by the sharing term, which is the interesting
case. Two further omissions run the other way and are worth naming: claimant litigation costs
raise the commitment required for full compensation above $v/\theta_{\mathrm{eff}}$
\citep{bebchuk1984}, while settlement rather than adjudication lowers the realised payment below
its face value. Appendix~\ref{app:data} records a companion package in which this fixed point is
solved numerically: the bias runs in the direction stated, and a settlement dip of sufficient
depth generates multiple fixed points.

\paragraph{Disclosure as an alternative signal.} If verifiable process disclosure --- system
cards, prompt and retrieval governance, audit certification --- could separate types, liability
would not be the last signal. \citet{erlei2025} finds experimentally that transparency raises
efficiency in human--AI--human credence-goods markets. But disclosure whose cost is
type-independent cannot separate: \citet{daughety2008} observe that the disclosure literature
assumes marginal cost independent of quality, which renders separation via signalling impossible
and leaves those types pooled, and \citet{board2009} shows that competition further undermines
full disclosure. Process disclosure is therefore a partial substitute --- high types disclose ---
not a general one. Whether it separates in professional services is open and, in our view, the
most valuable extension.

\paragraph{What would falsify the argument.} First, evidence that architecturally diverse
verifiers achieve genuine error independence, driving $\xi$ toward zero. The available evidence
points the other way \citep{kim2025,goel2025,denisov2026,kuai2026,pombal2026} --- \citet{kuai2026} in
particular find that apparent diversity across model families conceals latent entanglement
inherited from shared pretraining and alignment --- but this remains a measurement question.
Second, $\delta \le 0$, no compositional shift; then the prospective half of the mechanism fails
while the level effect survives. Third, evidence that liability disciplines credence-goods markets
even at low $\theta_{\mathrm{eff}}$. Fourth, and specific to the regime map: if a deep affirmative
market in AI liability cover emerges with capacity well above engagement values, $\ell$ ceases to
bind and the doctrine channel becomes dominant again. Current capacity --- roughly \$10m to \$25m
per risk across a handful of carriers \citep{leung2026} --- is small relative to large-matter
exposure.

\paragraph{Measurement of $\xi$.} Model-sharing intensity is not observable, because providers do
not disclose the provenance of the models embedded in their workflows. A crude anchor nonetheless
exists: enterprise foundation-model spending is heavily concentrated, with survey data for 2025
putting the three largest providers at roughly 40, 27 and 21 per cent of enterprise LLM spend
--- a Herfindahl index near 0.28, our own calculation from these three shares alone, so a
lower bound on concentration among the named leaders that ignores the residual --- so the
probability that an independently drawn reviewer shares the generator's provider is not small.\footnote{Menlo Ventures, \emph{2025: The State of
Generative AI in the Enterprise} (December 2025). The mid-year wave differs materially, so we
read these shares as volatile orders of magnitude.} We treat this as an order-of-magnitude
anchor for $\xi$, not an estimate. A first measurement route now exists on the behavioural side:
the audit framework of \citet{kuai2026} estimates entanglement between black-box models from
their joint failure patterns, so the sharing structure, while unobservable from provenance
disclosures, is no longer unmeasurable in principle. The unobservability of provenance remains
the principal empirical limitation and, as Section~\ref{sec:policy} argues, itself a regulatory
implication.

\section{Policy}\label{sec:policy}

Three implications follow, of which the first inverts the natural recommendation.

Mandatory logging and audit-trail standards raise $\theta_0$ and are desirable on other grounds,
but by Corollary~\ref{cor:nu} the effective ceiling is set by $1-\xi$ and is invariant to
verifier quality. The instrument that addresses the mechanism is a requirement on verifier
\emph{provenance}: that review and generation not share base weights, training corpora or
providers. This is verifiable in principle and currently unregulated.

Disclosure of model provenance is a precondition for both regulation and research, since $\xi$ is
otherwise unobservable and neither a provenance requirement nor a test of this paper's central
claim is feasible. The European framework illustrates the gap: Regulation (EU) 2024/1689 imposes
transparency duties from August 2026 and fines of up to EUR~35 million or seven per cent of
worldwide turnover, yet contains no requirement on the provenance of the verifier, and the
digital-omnibus amendments of Regulation (EU) 2026/1744 --- the AI-specific omnibus, distinct
from the still-pending data-protection omnibus --- which postponed the Act's high-risk
obligations, left that gap untouched. Liability policy has moved the other way: the Commission
formally withdrew the proposed AI Liability Directive in October 2025 (OJ C/2025/5423),
returning non-contractual AI liability to
national law, while the revised Product Liability Directive (EU) 2024/2853 extends strict
producer liability to software from December 2026. Jurisdictional fragmentation is therefore
widening rather than narrowing --- which is precisely the margin on which the regime map of
Section~\ref{sec:regimes} operates.

Finally, the regime map implies that interventions must be targeted. Relaxing the enforceable
multiple does nothing in a solvency-bound cell, and expanding insurance capacity does nothing in
a doctrine-bound one. Since the January 2026 exclusions move cells across the boundary, an
instrument calibrated to the previous regime may now address the slack constraint.

\section{Conclusion}

Outcome-contingent liability is a genuine substitute for collapsed production-cost signals, and it
is robust to capability growth in the sense usually claimed. We have argued that the residual on
which it is priced is not homogeneous, and that its composition shifts toward the common component
precisely as capability rises, so that what remains becomes progressively unprovable. And we have
argued that this mechanism operates against two other ceilings, one legal and one financial, whose
relative position determines whether it matters at all.

The market for expert services is not failing for one reason. It is constrained by law where
engagements are small and by capital where they are large, and the events of January 2026 have
moved that boundary. Whether the provability channel is the operative one is a question about
where in the regime map a market sits --- which is an empirical question, and one that the
unobservability of model provenance currently prevents anyone from answering.

\newpage
\appendix
\section{Proofs}\label{app:proofs}

\begin{proof}[Proof of Lemma~\ref{lem:bang}]
Subtracting $\alpha$ from \eqref{eq:dyn} gives $s'-\alpha = (s-\alpha)(1-k(h))$, so
$|s'-\alpha| < |s-\alpha|$ iff $k>0$; hence $s \to \alpha$ when $k>0$ and $s \to s_F$ otherwise.
Setting $k(h)=0$ yields $h_{\min}$, and
$\partial h_{\min}/\partial\pi = -\gamma\varphi/(\varphi\pi+\gamma)^{2} < 0$. Since the steady
state is a step function of $h$ and the flow cost of engagement is increasing in $h$, no interior
$h$ other than $h_{\min}+\epsilon$ can be optimal.
\end{proof}

\begin{proof}[Proof of Proposition~\ref{prop:theta}]
Condition on a failure: with probability $\kappa$ it is common, and in that state the verifier
fails with probability $\xi$. Hence
$\Prob(\text{provable}\mid\text{failure}) = \theta_0[(1-\kappa)+\kappa(1-\xi)] = \theta_0(1-\xi\kappa)$.
Monotonicity in $\xi$ and $\kappa$ is immediate and in $g$ follows from
Assumption~\ref{ass:shift}. The $\nu$ term multiplies $\theta_0$ and leaves the bracket unchanged.
\end{proof}

\begin{proof}[Proof of Proposition~\ref{prop:survival}]
The commitment must satisfy $\theta_{\mathrm{eff}}L \ge v$, so $L \ge v/\theta_{\mathrm{eff}}$,
while credibility and enforceability give $L \le \min\{\bar L, mv\}$. A feasible $L$ exists iff
$v/\theta_{\mathrm{eff}} \le \min\{\bar L, mv\}$, i.e.\ $1/\theta_{\mathrm{eff}} \le M$ with
$M = \min\{\ell,m\}$, $\ell = \bar L/v$. Substituting \eqref{eq:thetaeff} and rearranging gives
$\xi\kappa \le 1-1/(M\theta_0)$. Under the identification of Remark~\ref{rem:ident},
$\sup_g\kappa(g)=\bar\kappa$, so survival at every capability level is equivalent to the
condition holding at $\kappa=\bar\kappa$. If $M\theta_0 \le 1$ the right-hand side is non-positive and no $\xi \ge 0$
satisfies it.
\end{proof}

\begin{proof}[Proof of Proposition~\ref{prop:regime}]
$M = \min\{\ell, m\}$ with $\ell = \bar L/v$ strictly decreasing in $v$ and $m$ independent of it.
Hence $\ell > m \iff v < \bar L/m$, giving \eqref{eq:vstar}. On the doctrine-bound branch
$M = m$, so $\xi^{*}$ in \eqref{eq:survival} does not involve $\bar L$. On the solvency-bound
branch $M = \bar L/v$, so $\xi^{*} = 1-v/(\bar L\theta_0)$ is strictly decreasing in $v$. Setting
$\xi = 0$ and solving $1/\theta_{\mathrm{eff}} \le \bar L/v$ gives
$v \le \bar L\theta_{\mathrm{eff}}$; if $m\theta_{\mathrm{eff}} < 1$ the doctrine constraint fails
for every $v$, since it is independent of $v$.
\end{proof}

\section{Parameter provenance}\label{app:prov}

Every numerical parameter is listed with its source and whether it was re-derived here or
inherited. Inherited values not re-derived are marked, following the diagnosis in
Section~\ref{sec:scope} that an unexamined inherited parameter was the principal defect of the
earlier version.

\begin{center}\small
\begin{tabularx}{\textwidth}{@{}lXl@{}}
\toprule
Parameter & Source & Status\\
\midrule
$\alpha = 0.90$, $a = 0.5$ & \citet{bauer2026b}; $\alpha$ there from \citet{singh2026} & inherited\\
$\rho_{\max} = 0.95$, $s_F = 0.03$ & \citet{bauer2026b} & inherited; corrected in this version\\
$\varphi = 1$ & normalisation & ---\\
$\gamma$ & recovered from $h_{\min} = \gamma/(\varphi\pi+\gamma)$ & derived\\
$\theta_0$ by engagement type & calibrated here; see the note below & new; corrected in this version\\
$\kappa_0 = 0.60$ & \citet{kim2025}, agreement rate when both models err (Helm) & external, order of magnitude\\
$\delta = 0.30$, $b = 0.28$ & assumed; $\delta = 0$ is the falsification null & assumed\\
$m$ (US $=1$) & Restatement \S\,356; UCC \S\,2-718 & primary legal source\\
$m$ (England, central 3, range $1.2$--$4$) & \emph{Cavendish} [2015] UKSC 67; \emph{Houssein} [2024] EWCA Civ 721, [2025] EWHC 2749 (Ch), affirmed [2026] EWCA Civ 830 & range, no point value defensible\\
$m$ (Germany $=10$, range $3$--$25$) & \S\,348 HGB; residual \S\,242, \S\,307 BGB; in standard terms BGH VII~ZR~42/22 caps near 5\% ($m\approx1$) & negotiated B2B only\\
$m$ (Austria $=1.5$, range $1.2$--$2$) & \S\,1336(2) ABGB & range; prevailing view, contested\\
$\bar L$ insured & SRA minimum terms; published PI benchmarks & external, order of magnitude\\
$\bar L$ own capital $= 0.15\bar L$ & assumed & assumed\\
$\xi$ & not observable; varied over $[0,1]$ & free parameter\\
\bottomrule
\end{tabularx}
\end{center}

\noindent\textbf{Withdrawn.} $m = 1.2$ for England, used in an earlier draft and inherited from
\citet{bauer2026c}, is withdrawn as indefensible for the reasons in Section~\ref{sec:model};
the second version of \citet{bauer2026c} adopts the same correction.

\medskip
\noindent\textbf{Corrected.} The previous version listed $\rho_{\max} = 0.85$ and $s_F = 0.10$ as
inherited from \citet{bauer2026b}. Neither value appears there: the baseline in
\citet{bauer2026b} is $\rho_{\max} = 0.95$ and $s_F = 0.03$; the $0.85$ matches the software-row
$\rho_{\max}$ of the occupation table in \citet{bauer2026c}, and the $0.10$ has no source in any
of the three companion papers. Both are restored to the \citet{bauer2026b} baseline. No reported
number depends on these parameters: these four enter the model only symbolically, and the
survival condition, the threshold table and both figures involve only $M$, $\theta_0$, $\xi$ and
the common-share path $\kappa(g)$ generated by $(\kappa_0, \delta, b)$.

\medskip
\noindent\textbf{Corrected ($\theta_0$).} The previous version listed the engagement-type
verifiabilities as inherited from \citet{bauer2026c}. They are not: that paper calibrates
\emph{occupations} (software $0.80$, legal $0.50$, radiology $0.70$, audit $0.85$), while the
regime map of Table~\ref{tab:regimemap} prices \emph{contracted engagements} --- a scoped matter,
a statutory audit, a fixed-price project, a per-study read --- selected precisely for the
contractibility of an outcome-contingent commitment. The two objects differ most where
selection bites hardest: \citet{bauer2026c}'s legal $0.50$ prices the contested causal link from
\emph{advice} to loss across the occupation, whereas a scoped matter with an engagement letter
and a deliverable standard makes breach substantially more provable, hence $0.80$ here. The
values used ($0.78$--$0.85$) are this paper's own calibration: legal matter $0.80$
(professional-negligence breach against a defined retainer); statutory audit $0.82$
(ISA-documented standard, inspection findings); fixed-price software $0.85$ (contractual
acceptance criteria); teleradiology per-study $0.78$ (retrospective re-read against ground
truth). Their tight clustering, against the occupation-level span $0.25$--$0.85$, is the
selection effect, not a disagreement. Sensitivity under the occupation-level values is reported
with Corollary~\ref{cor:shift}. The counts of that corollary are evaluated at
$\bar\kappa = \kappa_0 = 0.60$ and are unchanged for $\bar\kappa \in [0.60, 0.72]$.

\section{Data availability}\label{app:data}

All numbers, tables and figures are produced by a self-contained Python package released with the
manuscript, containing the calibration module, the regime-map module, the figure scripts, the
serialised results and the random seeds. No proprietary or licensed data are used: all inputs are
either published parameter values cited in the text or synthetic values generated by the code.
The archive also carries, beyond the paper-backing results, a deposit-internal
verification battery of the collapse mechanism (capability thresholds per
occupation under channel configurations; \texttt{make\_battery.py}, with its
construction documented in the archive and the stakes modelled under the
enforceability variant of Section~\ref{sec:demand}), the producer of the
demand-side bracket values of that section (\texttt{make\_demand.py}), and a
self-verification gate (\texttt{verify\_deposit.py}) that re-derives every
producible result file and compares it to the shipped copy. The archive is
deposited under a CC BY 4.0 licence at
\href{https://doi.org/10.6084/m9.figshare.33188046}{doi:10.6084/m9.figshare.33188046}.

A separate companion package, deposited at
\href{https://doi.org/10.6084/m9.figshare.33190722}{doi:10.6084/m9.figshare.33190722},
holds the numerical verification of the three extensions identified in
Section~\ref{sec:scope}: the enforceable multiple as a fixed point, the provability floor as the
fixed point $L = v/\theta_{\mathrm{eff}}(L)$, and the insurability of correlated model failures.
It reports one result that bears directly on the present paper. Under a monotone effort response
$\theta_{\mathrm{eff}}'(L) > 0$ the fixed point lies strictly below the exogenous floor
$v/\theta_{\mathrm{eff}}$, which confirms the direction of bias claimed in Section~\ref{sec:scope}
and puts its size at roughly a fifth in that calibration: the constraint we report is
conservative. Its data are synthetic and generated from the model equations themselves, so the
package verifies and illustrates the derivations; it is not evidence for them, and nothing in
this paper rests on it.

\section*{Declarations}

\paragraph{Generative AI and AI-assisted technologies in the writing process.} During the
preparation of this work the author used Anthropic's Claude, a large language model, for
literature search and source verification, drafting and editing of text, derivation checking,
and the production of the replication code and figures. The author reviewed and edited all
content and takes full responsibility for the content of this article.

\paragraph{Competing interests.} The author is Managing Director and Founder of Aegis Compliance
and Strategies O\"U, a compliance and strategy advisory firm, and is preparing a practitioner
book that draws on this line of work.

\paragraph{Funding.} This research received no external funding.

\paragraph{CRediT authorship contribution.} Andreas Bauer: Conceptualization, Methodology,
Formal analysis, Software, Investigation, Writing --- original draft, Writing --- review \&
editing.

\paragraph{Data availability.} See Appendix~\ref{app:data}; replication archive at
\href{https://doi.org/10.6084/m9.figshare.33188046}{doi:10.6084/m9.figshare.33188046} and
companion verification package at
\href{https://doi.org/10.6084/m9.figshare.33190722}{doi:10.6084/m9.figshare.33190722}.

\end{document}